\documentclass{article}
\usepackage{fullpage}
\usepackage[dvipsnames]{xcolor}
\usepackage{hyperref}
\hypersetup{
    colorlinks=true,
    linkcolor=blue,
    citecolor=violet,
    filecolor=magenta,      
    urlcolor=cyan,
}

\usepackage{authblk}
\usepackage{graphicx,graphbox}
\graphicspath{{./figures/}}
\usepackage{wrapfig}

\usepackage{tikz-cd}

\usepackage{amssymb,amsmath,amsthm,bbm}
\usepackage{mathtools}%
\usepackage[capitalize]{cleveref}

\usepackage{amsthm,amssymb,amsmath}

\usepackage{thm-restate}
\usepackage{enumitem}

\numberwithin{equation}{section}

\newtheorem{theorem}[equation]{Theorem}
\newtheorem{cor}[equation]{Corollary}
\newtheorem{lemma}[equation]{Lemma}

\newtheorem{proposition}[equation]{Proposition}

\newtheorem{definition}[equation]{Definition}

\Crefname{defn}{Defn.}{Defns.}
\Crefname{lemma}{Lem.}{Lems.}
\Crefname{alg}{Alg.}{Algs.}

\usepackage[backend = biber, url=false,sorting=nyt,style=numeric-comp,maxbibnames = 99]{biblatex}
\addbibresource{interleaving.bib}     %

\usepackage[colorinlistoftodos,prependcaption,textsize=small]{todonotes}

\definecolor{ltgray}{gray}{0.9}
\definecolor{gray}{rgb}{0.95,0.95,0.96}
\definecolor{dkgray}{rgb}{0.7,0.7, 0.735}
\definecolor{ltblue}{rgb}{0.55,0.55, 0.95}
\definecolor{ltGreen}{rgb}{0.25,0.65, 0.25}
\definecolor{dkgreen}{RGB}{0, 100, 0}
\definecolor{dkred}{rgb}{0.75,0.0, 0.0}
\definecolor{dkorange}{rgb}{.82,.45, 0.0}
\definecolor{ltred}{rgb}{0.95,0.95, 0.85}
\definecolor{utahRed}{rgb}{.8, 0, 0}
\definecolor{oregonGreen}{rgb}{0, .41, .163}
\definecolor{albanyPurple}{rgb}{0.4, 0, .55}

\newcommand{\R}{\mathbb{R}}

\newcommand{\X}{{\mathbb X}}
\newcommand{\Y}{{\mathbb Y}}
\newcommand{\Z}{{\mathbb Z}}

\newcommand{\cC}{\mathcal{C}}
\newcommand{\cD}{\mathcal{D}}

\newcommand{\cU}{\mathcal{U}}

\newcommand{\Open}{\mathbf{Open}}
\newcommand{\Set}{\mathbf{Set}}
\newcommand{\Int}{\mathbf{Int}}

\newcommand*{\Parallelogramr}[1][]{
  \pgfpicture\pgfsetroundjoin
    \pgftransformxslant{.6}
    \pgfpathrectangle{\pgfpointorigin}{\pgfpoint{.60em}{.65em}}
    \pgfusepath{stroke,#1}
  \endpgfpicture}
  
\newcommand*{\Parallelograml}[1][]{
  \pgfpicture\pgfsetroundjoin
    \pgftransformxslant{-.6}
    \pgfpathrectangle{\pgfpointorigin}{\pgfpoint{.60em}{.65em}}
    \pgfusepath{stroke,#1}
  \endpgfpicture}
  
\newcommand{\triangled}{\raisebox{\depth}{$\bigtriangledown$}}
\newcommand{\triangleu}{\bigtriangleup}

\usepackage{scalerel}
\newcommand{\Lpl}{L_{\scaleobj{.7}{\Parallelograml}}}
\newcommand{\Lpr}{L_{\scaleobj{.7}{\Parallelogramr}}}
\newcommand{\Ltd}{L_{\bigtriangledown}}
\newcommand{\Ltu}{L_{\bigtriangleup}}

\newcommand{\e}{\varepsilon}
\renewcommand{\phi}{\varphi}
\newcommand{\inv}{^{-1}}

\newcommand{\1}{\mathbbm{1}}

\newcommand{\para}[1]        {\vspace{0mm}\noindent{\textbf{#1}}}

\newcommand{\tF}{\widetilde{F}}
\newcommand{\tG}{\widetilde{G}}
\newcommand{\tphi}{\widetilde{\phi}} 
\newcommand{\tpsi}{\widetilde{\psi}} 

\newcommand{\phitt}{\texttt{phi}}
\newcommand{\psitt}{\texttt{psi}}

\newcommand {\mm}[1] {\ifmmode{#1}\else{\mbox{\(#1\)}}\fi}

\newcommand{\Top}{\mathbf{Top}}

\begin{document}
\title{Bounding the Interleaving Distance for Mapper Graphs with a Loss Function
 \thanks{The authors would like to thank Vidit Nanda for helpful discussions related to non-commutative diagrams, as well as anonymous reviewers whose feedback improved the state of the paper. This work was funded in part by the National Science Foundation through grants CCF-1907591, CCF-1907612, CCF-2106578, CCF-2142713, CCF-2106672, DMS-2301361, and IIS-2145499.}
}

\author[1]{Erin Chambers}
\author[2]{Elizabeth Munch}
\author[3]{Sarah Percival}
\author[4]{Bei Wang}
\affil[1]{University of Notre Dame}
\affil[2]{Michigan State University}
\affil[2]{University of New Mexico}
\affil[4]{University of Utah}

\date{}

\maketitle

\begin{abstract}
Data consisting of a graph with a function mapping into $\R^d$ arise in many data applications, encompassing structures such as Reeb graphs, geometric graphs, and knot embeddings. 
As such, the ability to compare and cluster such objects is required in a data analysis pipeline, leading to a need for distances between them.  
In this work, we study the interleaving distance on discretization of these objects, called mapper graphs when $d=1$, where functor representations of the data can be compared by finding pairs of natural transformations between them. 
However, in many cases, computation of the interleaving distance is NP-hard.
For this reason, we take inspiration from recent work by Robinson to find quality measures for families of maps that do not rise to the level of a natural transformation, called assignments. 
We then endow the functor images with the extra structure of a metric space and define a loss function which measures how far an assignment is from making the required diagrams of an interleaving commute. 
Finally we show that the computation of the loss function is polynomial with a given assignment.
We believe this idea is both powerful and translatable, with the potential to provide approximations and bounds on interleavings in a broad array of contexts.

\end{abstract}

\section{Introduction}
\label{sec:introduction}

Graphs with additional geometric information arise in many contexts in data analysis. 
For instance, a \emph{geometric graph} is generally defined as an abstract graph along with a well-behaved embedding of the graph into $\R^2$, while a graph with a well-behaved map into $\R$ is called a \emph{Reeb graph}.
In particular from the viewpoint of the Reeb graph, these types of input data can arise by studying connected component structures from more general input $\R^d$-spaces, meaning a topological space $\X$ with a function $f:\X \to \R^d$. 
Such graphs are a fundamental object used to model a wide range of data sets, ranging from maps and trajectories \cite{Ahmed2015, Buchin2023}, to commodity networks (e.g.~transportation networks \cite{Jiang2022,Myers2023}) and shape skeletons for object recognition \cite{Ayzenberg2022, Zeng2021}. 
The ability to compare, cluster, and simplify such representative objects is essential in a data analysis pipeline, leading to a need for theoretically motivated and computable distances. 
In this paper, we study a distance for a discretization of the input data, known as mapper \cite{Singh2007}. 
That is, starting from a topological space $\X$ with a function $f:\X \to \R^d$ (resp.~a point cloud $P$ with a function $f: P \to \R^d$), mapper is an encoding of the connected components (resp.~clusters) of $f\inv(U_\alpha)$ for some cover $\cU = \{U_\alpha\}$ of $\R^d$ defined as the nerve of the pullback cover. 
When $d=1$, this results in a graph structure called a \emph{mapper graph} (see Fig.~\ref{fig:geomgraphs}).
Since there are higher dimensional cells, we call the resulting construction for $d>1$ an $\R^d$-mapper complex; however we abuse terminology and generally call this construction a mapper graph whether the dimension is 1 or not.

There has been extensive work on metrics for general graphs, geometric graphs, and Reeb graphs 
(see surveys \cite{Deza2013,Conci2017,Donnat2018,Wills2020}, \cite{Buchin2023}, and \cite{YanMasoodSridharamurthy2021, Bollen2021} respectively). 
In this paper, we will draw inspiration from the interleaving distance;
specifically, we develop a natural extension of the interleaving distance on Reeb graphs~\cite{deSilva2018} to the setting of mapper graphs.   
Interleaving distances arose in the context of generalizing the bottleneck distance for persistence modules \cite{Chazal2009} and were subsequently translated to more general categorical frameworks in \cite{Bubenik2014a, deSilva2018}. 
With the exception of 1-parameter persistence \cite{Lesnick2015}, the interleaving distance is NP-hard in many contexts including multi-parameter persistence \cite{Bjerkevik2018,Bjerkevik2019}, and Reeb graphs~\cite{deSilva2018}.
However, some additional structural information can give better algorithms such as FPT computation for merge trees \cite{FarahbakhshTouli2019}, and  polynomial time for formigrams \cite{Kim2019b} and labeled merge trees \cite{Munch2018,Gasparovic2019}. 
Indeed, the closest work to our approach is work providing bounds for the interleaving distance restricted to merge trees: 
\cite{Curry2022} use the Gromov-Wasserstein distance to find a leaf labeling that gives an upper bound using the easy to compute labeled interleaving distance \cite{Gasparovic2019, Munch2018}; 
while \cite{Pegoraro2021} uses the map formulation of \cite{Morozov2013,FarahbakhshTouli2019} with an integer linear program to provide a bound.
See~\cite{Bjerkevik2018} for a recent summary of interleaving distance complexity results.

When $d=1$, there is already work using the interleaving distance to relate the Reeb graph and its mapper graph \cite{Carriere2017,Carriere2018,Munch2016,Brown2020,botnan2020}. 
We will encode the structure of our more general $\R^d$-mapper complexes in a discretized setting by imposing a grid structure $K$ on $\R^d$. 
Then we can represent the input data $f:X \to \R^d$ as a cosheaf of the form $F:\Open(K) \to \Set$ where we store the path-connected components of inverse images of open sets $\pi_0(f\inv(U))$. 
The idea of the interleaving distance, in this context, is to compare two cosheaves $F,G:\Open(K) \to \Set$ using a pair of natural transformations $\phi:F \Rightarrow G^n$ and $\psi:G \Rightarrow F^n$ mapping into relaxations of the original inputs.   
The complexity of computing this distance then relies on finding the smallest $n$ with available $\phi$ and $\psi$ maps, which in our setting immediately connects to hard underlying problems such as graph isomorphism.

The ideas building this distance are rooted in previous work that study interleavings in related contexts. 
In some sense, we can view this distance as a  discretized cosheaf version of the continuous sheaf version of the interleaving distance that was previously studied~\cite{Curry2014}. 
It can fit into the more general framework of an interleaving distance arising from a category with a flow \cite{deSilva2016}, or as an interleaving distance on generalized persistence modules with a family of translations \cite{Bubenik2014a}, but prior work in these areas focused on theoretical properties and did not address computational aspects, as the more general framework makes such study incredibly difficult. 
Perhaps the closest version of this distance is mentioned as a special case of a general categorical framework~\cite{botnan2020}; however, that setting keeps the thickening of the open sets structure tightly bound to thickening in $\R^d$, whereas we choose to define the distance entirely over the combinatorial structures.

\begin{figure}
    \centering
    \includegraphics[width = .4\textwidth]{example2.pdf}
    \qquad
    \qquad
    \includegraphics[width = .4\textwidth]{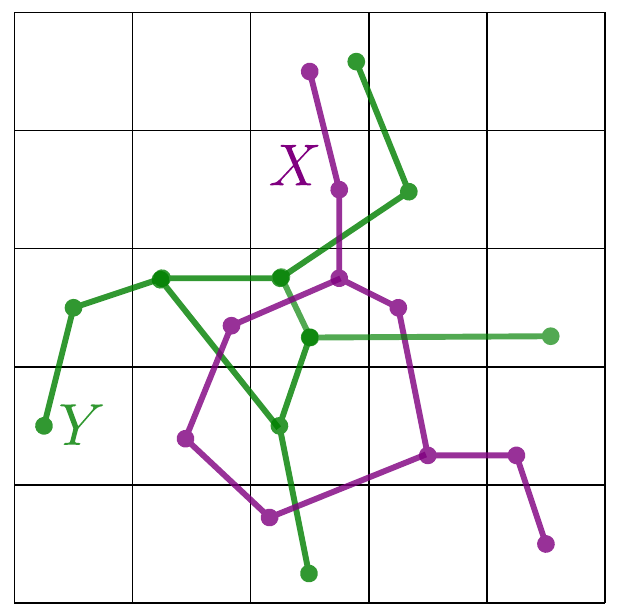}
    \caption{(Left) An example of a Reeb graph ($d=1$) where the discretization of $\R$ is used to generate a mapper graph. (Right) An example of a pair of geometric graphs ($d=2$) which we study using the connected components restricted to the grid. }
    \label{fig:geomgraphs}
\end{figure}

While all this prior work is powerful in theory, the computational complexity of the construction in more general settings has meant a lack of the use of the interleaving distance in practice.
To circumvent these issues, we take inspiration from recent work of Robinson \cite{Robinson2020} to define quality measures for families of maps that do not rise to the level of a natural transformation, in order to allow for non-optimal maps $\phi$ and $\psi$ in this framework.  
We then apply these quality measures to $\R^d$-mapper complexes, in the hopes of utilizing algorithms from geometry and graph theory to make computation more feasible.

In particular, in \cite{Robinson2020} the object of study is a single input assignment of data of the form $P: \Open(X) \to \Set$ and, with the added structure of a pseudometric for each set $P(U)$, provides a measurement for how far the input data is from having a global section. 
In our work, we instead work with a pair of functors $F,G:\Open(K) \to \Set$ as input, and study collections of maps 
${\phi = \{\phi_U: F(U) \to G(U^n)\mid U\}}$ and $\psi = \{\psi_U:G(U) \to F(U^n)\mid U\}$ which we call an \textit{assignment} when they do not necessarily form a true interleaving.  
We then endow the image with the extra structure of a metric space, so that we have pairs $(F(U),d_U)$ for every open set $U$. 
Using this metric structure, we define a loss function $L(\phi,\psi)$ which measures how far the required diagrams of an interleaving are from commuting, given any input assignment (Thm.~\ref{thm:bound}).  
We modify this bound by only focusing on the loss function computed for a basis of the topology, $L_B(\phi,\psi)$ (Thm.~\ref{thm:secondBound}), which not only improves the computational complexity but also improves the bound. 
Then, we show that the computation of the bound is polynomial, opening up the potential for algorithmic approximation of the interleaving distance. 
Throughout, we show examples encoding the data of a \textit{geometric graph} (i.e.~a graph $G$ with a straight line embedding $f:G \to \R^2$) or a Reeb graph (a graph $G$ with a straight line map to $\R$); see Fig.~\ref{fig:geomgraphs} for an illustration.  However, this kind of input is not a requirement for our framework.
 
We note that this paper is the first step in a larger project. 
That is, the paper here presents a loss function that can be computed given an input $n$-assignment $\phi, \psi$ and results in a bound on the distance by explicitly constructing an $\left(n+L_B(\phi,\psi)\right)$-interleaving. 
As with many garbage-in-garbage-out settings, this bound is only as good as the input, but this study seeks to determine the most general possible bound with no guarantees on the input at all. 
In the followup work \cite{Chambers2025},  we include this loss function as part of an optimization strategy to update a given assignment in order to find better bounds as well as provide further study on how close to optimal is possible. 

\para{Outline.}
In Sec.~\ref{sec:background} we provide the necessary technical background to set up the interleaving distance for  mapper graph inputs. 
In Sec.~\ref{sec:loss-function}, we define the loss functions and bounds. 
We discuss algorithmic requirements of the bound in Sec.~\ref{sec:computation}.
Next, we show how this loss function can be used to similarly bound the Reeb graph interleaving distance by approximating the Reeb graph with a mapper graph in Sec.~\ref{sec:ReebLoss}.
We include all technical proofs in Sec.~\ref{sec:technicalProofs}.
Finally, we discuss broader implications of this work in Sec.~\ref{sec:discussion}. 

\section{Technical Background}
\label{sec:background}

We will assume several example types of inputs in this paper. 
All are tied together by having data of the form $f:\X \to \R^d$, where $\X$ is a topological space.
In particular, we will require that $\pi_0(f\inv(U))$ is a finite set for some reasonable collection of open sets $U \subset \R^d$. 
We thus assume that $\X \subset \mathbb{R}^k$ is a semi-algebraic set and that $f$ is a semi-algebraic map since this results in our desired restrictions.

\subsection{Functors and Cosheaves}
\label{ssec:cosheaf}

We give basic definitions for the category theoretic notions required in this paper, and direct the interested reader to \cite{Riehl2017,Curry2014} for further details. 
A \emph{category} $\cC$ consists of a collection of objects $X,Y,Z,\dots$ and morphisms $f,g,h,\dots$ with the following requirement: morphisms $f:X \to Y$ have designated domain $X$ and codomain $Y$; every object has a designated identity morphism $\1_X: X \to X$, and any pair of morphisms $f:X \to Y$ and $g:Y \to Z$ has a composite morphism $g \circ  f:X \to Z$. 
These objects and morphisms satisfy an identity axiom, where $f:X \to Y$ is the same as the $\1_Y \circ  f$ and $f \circ \1_X$; and composition (denoted by $\circ$ but often dropped when unnecessary) is associative, so $h\circ (g\circ f) = (h\circ g)\circ f$. 
Some example categories are $\Set$ where objects are finite sets and morphisms are set maps; 
$\Top$ where objects are topological spaces and morphisms are continuous functions;
and $\Open(\X)$ for a given topological space $\X$, where the objects are open sets and morphisms are given by inclusion. 

A \emph{functor} $F:\cC \to \cD$ is a map between categories preserving the relevant structures.
Specifically, for every object $X \in \cC$ there is a an object $F(X) \in \cD$, and for every morphism $f:X \to Y$, there is a morphism $F[f]:F(X) \to F(Y)$. 
To be a functor, $F$ must further satisfy that for  any $X \in \cC$, $F[\1_X] = \1_{F(X)}$ and for any composable pair $f,g \in \cC$, we have $F[gf] = F[g] F[f]$. 
Given functors $F,G: \cC \to \cD$, a \emph{natural transformation} $\eta: F \Rightarrow G$  consists of a map $\eta_X:F(X) \to G(X)$ for every $X \in \cC$ (called the components) so that for any morphism $f:X \to Y$ in $\cC$, the following diagram 
\begin{equation*}
\begin{tikzcd}
X\ar[d,  "f"] 
& F(X) 
    \ar[r, "\eta_X"] 
    \ar[d, "{F[f]}"']
& G(X) 
    \ar[d, "{G[f]}"]
\\
Y 
& F(Y) \ar[r, "\eta_Y"] 
& G(Y)
\end{tikzcd}
\end{equation*}
commutes. 
One example is $\pi_0:\Top \to \Set$, where $\pi_0(\X)$ is the set of path-connected components of $\X$, and morphisms are set maps $\pi_0[f]: \pi_0(\X) \to \pi_0(\Y)$ sending a connected component $A$ in $\X$ to the connected component of $f(A)$ in $\Y$.
Note that throughout the paper, we use the term \textit{component} to mean \textit{path-connected component}. 

A diagram is a functor $F:J \to \cC$ where $J$ is a small category. 
In essence, this construction picks out a collection of objects $F(j)$ and a collection of morphisms $F(j) \to F(k)$. 
A cocone on a given diagram is a natural transformation $\lambda:F \to c$ where we abuse notation to write that $c: J \to \cC$ is the constant functor returning the object $c(j) = c \in \cC$ for all $j \in J$. 
We often call the components $\lambda_j:F(j) \to c$ the \emph{legs}, and note that this requirement says that for any $f:j \to k$ in $J$, $\lambda_k \circ F[f] = \lambda_j$. 
A cocone $\lambda:F \to c$ is called a colimit if for any other cocone $\lambda':F \to c'$, there is a unique  morphism $u:c \to c'$ such that 
\begin{equation*}
\begin{tikzcd}
F(j) 
    \ar[rr, "{F[f]}"]
    \ar[dr, "\lambda_j"]
    \ar[ddr, "\lambda_j'"']
&& F(k)
    \ar[dl, "\lambda_k"']
    \ar[ddl, "\lambda_k'"]
\\
& c \ar[d, dashed, "u"] \\
& c'
\end{tikzcd}
\end{equation*}
commutes for all $f:j \to k$ in $J$. 

We will be particularly interested in functors of the form $F:\Open(X) \to \Set$, which can also be called \emph{pre-cosheaves}. 
A pre-cosheaf is a \emph{cosheaf} if it satisfies a gluing axiom,  meaning that $F(U)$ is entirely determined by $F(U_\alpha)$ for any cover $\{U_\alpha\}_\alpha$. 
Specifically, given an open set $U$ and a cover $\{ U_\alpha \mid \alpha \in A\}$ of $U$, define a category $\cU = \{U_\alpha \cap U_{\alpha'} \mid \alpha,\alpha' \in A \}$ with morphisms given by inclusion. 
Then we have a diagram $F:\cU \to \Set$, and as such can consider its colimit $\lambda:F \to L$.
If the unique map $L \to F(U)$ given by the colimit definition is an isomorphism, then $F$ is called a \emph{cosheaf}.

\subsection{Functorial Representation of Embedded Data}
\label{ssec:functorGraphs}

Assume we are given as input a pair of compact topological spaces with valued functions $f: \X \to \R^d$ and $g: \Y \to \R^d$.
We will construct a cover  controlled  by diameter $\delta\geq 0$ of the images $f(\X) \cup g(\Y)$ in order to define the discretized mapper complex. 
Assume that a bounding box containing $f(\X) \cup g(\Y)$ can be written as $[-B,B]^d = [-L\delta, L\delta]^d$. 

Following \cite{Kaczynski2004}, $\delta$ induces a discretization of $[-L\delta, L\delta]^d$ into a cubical complex in the following way. 
An \emph{elementary interval} is an open interval in $\R$ of the form\footnote{Note our use of open intervals here in order to have open sets in our cover later, which differs from the definition given in \cite{Kaczynski2004}.} 
$(\ell\delta, (\ell+1)\delta)$ or a single point viewed as a degenerate interval
$[\ell]:= [\ell\delta, \ell\delta]$ for $\ell \in [-L,\cdots, L] \subset \Z$.
These are called non-degenerate and degenerate intervals, respectively. 
An elementary (open) cube $Q$ is a finite product of $d$ elementary intervals, i.e.
   $ \sigma = I_1 \times I_2 \times \cdots \times I_d \subset [-B,B]^d$. 
The dimension of a cube $\sigma$ is given by the number of intervals used which are non-degenerate. 
This means that 
$0$-cubes are vertices at grid locations 
$(i\delta, j \delta, \ldots, k \delta) \in \delta \cdot \Z^d$, 
$1$-cubes are edges (not including their endpoints), 
$2$-cubes are squares (not including their boundaries),
$3$-cubes are voxels, etc. 
The collection of elementary cubes discretizing $[-B,B]^d$ is a finite cubical complex $K$.
This construction comes with a face relation which gives a poset structure, where we write $\sigma \leq \tau$ iff $\sigma \subseteq \overline{\tau}$, where $\overline{\tau}$ denotes the closure of the set. 
In order to differentiate between the combintorial and continuous settings, we write $|\sigma|$ for the geometric realization  in $\R^d$ of a combinatorial object $\sigma \in K$. 

This complex $K$ induces a cover $\cU$ of $[-B,B]^d$ as follows. 
For any cube $\sigma \in K$, we can find the upper closure of $\sigma$ using the face relation, i.e.~$\sigma^\uparrow = \{ \tau \in K \mid \tau \geq \sigma\}$.
The \emph{cover element associated to $\sigma$} is  $U_\sigma = \bigcup_{\tau \in \sigma^{\uparrow}} |\tau|$.  
Then we write the cover as $\cU = \{U_\sigma \mid \sigma \in K\}$. 
Note that there is a poset relation on $\cU$ given by inclusion, and in particular, $U_\sigma \subseteq U_\tau$ iff $\tau \leq \sigma$.

We next endow the poset $(\cU, \subseteq)$ with the Alexandroff topology, following \cite{Barmak2011}. 
For any set $S \subseteq \cU$, the upper closure, or \emph{upset}, is 
$S^{\uparrow} = \{U \in \cU \mid V \subseteq U, \, V \in S \}$ and the downward closure, or \emph{downset}, is
$S^{\downarrow} = \{ U \in \cU \mid U \subseteq V, \, V \in S\}$. 
We give $(\cU, \subseteq)$ the Alexandroff topology
$\Open(\cU)$, where a set $S\subseteq K$ is open iff the following holds: 
for any $U \in S$ and any $V \subseteq U$, $V \in S$. 
Equivalently, this means that $S$ is its own downset, i.e.~$S = S^\downarrow$.
See Fig.~\ref{fig:NewNotation} for an example of this notation in the case of $d=1$, where the open set $U_{\sigma_i}$ is associated to the point $\sigma_i  = i\delta$, and open set $U_{\tau_i}$ is associated to the edge $\tau_i = (i\delta,(i+1)\delta)$. 

\begin{figure}
    \centering
    \includegraphics[width = .5\textwidth]{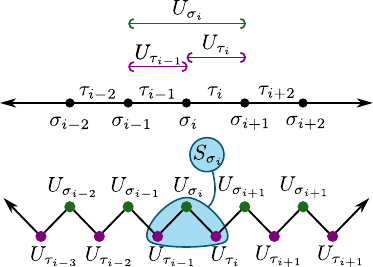}
    \caption{Example of the notation for the case $d=1$, where a discretization of the real line (top) induces a cover, and the poset representation of this cover can be seen in the bottom row. The basic open associated to $U_{\sigma_i}$ is the three element set $S_{\sigma_i} = \{U_{\tau_{i-1}},U_{\sigma_i},U_{\tau_i} \}$.}
    \label{fig:NewNotation}
\end{figure}

It can be checked that this topology has a basis given by the collection  
$\{S_\sigma\}_{\sigma \in K}$ 
where we write 
\begin{equation}
\label{eq:S_sigma}    
    S_\sigma := \{ U_\sigma\}^{\downarrow} 
        = \{U_\tau \in \cU  \mid U_\tau \subseteq U_\sigma \} 
        = \{U_\tau \in \cU \mid \tau \in \sigma^\uparrow\}
\end{equation}
and call $S_\sigma$ a \emph{basic open} set.
This complex is constructed so that for any subset $S \subset \cU$, the geometric realization $|S|:= \bigcup_{U \in S}U \subset \R^d$ is an open set; and further the notation is reasonable since the geometric realization of the basis set associated to $\sigma$ is the same as the open set associated to $\sigma$, i.e.~$|S_\sigma| = U_\sigma$ for all $\sigma \in K$. 
Again, see Fig.~\ref{fig:NewNotation} for an example of this notation.

We assume that the inputs $f:\X \to \R^d$ and $g:\Y \to \R^d$ are semi-algebraic maps defined on semi-algebraic sets in $\mathbb{R}^k$, 
as well as being compact as assumed earlier. Recall that the class of semi-algebraic sets is the smallest class of sets defined by a finite number of polynomial (in)equalities $\{x \in \mathbb{R}^k \mid p(x) \geq 0\}$ that is closed under complement, finite union, and finite intersection. A map $f\colon \X \to \mathbb{R}^d$ is semi-algebraic if the graph of $f$ is a semi-algebraic set in $\mathbb{R}^k \times \mathbb{R}^d$. A semi-algebraic set is \emph{semi-algebraically connected} if it cannot be written as the disjoint union of two non-empty open semi-algebraic sets. Analogously, a semi-algebraic set $X$ is \emph{path connected} if for any two points $x, x' \in X$ there is a continuous semi-algebraic map $\gamma \colon [0,1] \to X$ such that $\gamma(0) = x'$ and $\gamma(1) = y'$. Note that, by definition, a semi-algebraically path connected semi-algebraic set is also path connected. From this observation, combined with Theorem 2.4.5 and Proposition 2.5.13 in \cite{Bochnak1998}, we see that any connected semi-algebraic set is also path-connected. We make this restriction so that we have the following property. 

\begin{lemma}
Given a semi-algebraic map $f: \X \to \R^d$ with $\X$ semi-algebraic, and any semi-algebraic open set $U \subset \R^d$, $\pi_0(f\inv(U))$ is finite. 
\end{lemma}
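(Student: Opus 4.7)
The plan is to reduce the statement to two standard facts from real algebraic geometry: semi-algebraic sets have finitely many semi-algebraically connected components, and (as the paragraph preceding the lemma already recalls) for semi-algebraic sets, semi-algebraic connectedness coincides with path connectedness.

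First I would argue that $f\inv(U)$ is itself a semi-algebraic subset of $\R^k$. By hypothesis the graph $\Gamma_f \subset \R^k \times \R^d$ is a semi-algebraic set, and so is $\X \times U$, since the class of semi-algebraic sets is closed under finite products and intersections. Hence $\Gamma_f \cap (\X \times U)$ is semi-algebraic. Applying the Tarski--Seidenberg theorem (closure of the semi-algebraic class under coordinate projection) to the projection $\R^k \times \R^d \to \R^k$, the image of this intersection is semi-algebraic. But this image is exactly $f\inv(U)$, so $f\inv(U)$ is a semi-algebraic subset of $\R^k$.

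Next I would invoke Theorem~2.4.5 of \cite{Bochnak1998}, which says that any semi-algebraic set is a disjoint union of finitely many semi-algebraically connected components, and each such component is itself semi-algebraic. Writing $f\inv(U) = \bigsqcup_{i=1}^{m} C_i$ for this decomposition, each $C_i$ is a semi-algebraically connected semi-algebraic set. By Proposition~2.5.13 of \cite{Bochnak1998}, cited in the paragraph above, each $C_i$ is then path connected. Since the $C_i$ are pairwise disjoint, open in $f\inv(U)$ (as the components in question are both open and closed in $f\inv(U)$), and path connected, they are precisely the path-connected components of $f\inv(U)$. Thus $\pi_0(f\inv(U))$ is a set of cardinality $m < \infty$.

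There is no real obstacle to this argument; the only place where a little care is needed is in the first step, verifying that the preimage of an open semi-algebraic set under a semi-algebraic map really is semi-algebraic in the ambient sense used by \cite{Bochnak1998}. This is immediate once one passes through the graph, but it is worth writing out explicitly so that the citation to Theorem~2.4.5 and Proposition~2.5.13 can be applied without ambiguity. Everything after that is a direct appeal to these two results.
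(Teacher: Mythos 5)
Your argument is correct and is essentially the proof the paper has in mind: the paper itself does not spell out an argument, it simply points to the standard literature (\cite{bpr}, and the references to Theorem~2.4.5 and Proposition~2.5.13 of \cite{Bochnak1998} in the preceding paragraph), and your write-up is exactly the elaboration one would give — show $f\inv(U)$ is semi-algebraic via the graph and Tarski--Seidenberg, then invoke the finite decomposition into semi-algebraically connected components and the fact that these coincide with path components. One small caveat worth checking against your edition of Bochnak--Coste--Roy: the finite-decomposition-into-connected-components statement is usually numbered Theorem~2.4.4, with 2.4.5 being the cell decomposition theorem; the paper uses the same 2.4.5/2.5.13 pair you do, so you are consistent with the authors, but a reader following the reference may stumble.
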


We refer the reader to \cite{bpr} for an overview of the basic properties of semi-algebraic sets and maps, from which this lemma easily follows.

Because each grid cell itself is a semi-algebraic set, we have that for any open set $S \in \Open(\cU)$, the set of (path) connected components $\pi_0 f\inv(|S|)$ is finite. 

Then we have a functor $F:\Open(\cU) \to \Set$ given by 
\begin{equation*}
    \begin{matrix}
    F: &  \Open(\cU) & \to & \Set\\
    & S & \mapsto & \pi_0 f\inv(|S|).
    \end{matrix}
\end{equation*}
Functoriality of $\pi_0$ means that for $S \subseteq T$, there is an induced map
\[F[S \subseteq T] \colon \pi_0f\inv(|S|) \to \pi_0f\inv(|T|),\]
so that $F$ satisfies the requirements of a functor.
Indeed, this functor is actually a cosheaf so moving forward, we assume that $F$ and $G$ are cosheaves, even if they was not obtained from some input topological space. 
When the sets involved are obvious in the notation, we will  write the induced map as $F[\subseteq]:F(S) \to F(T)$. 

\subsection{Thickenings}
\label{ssec:thickenings}
\begin{figure}
    \centering
    \includegraphics[width = 0.8\textwidth]{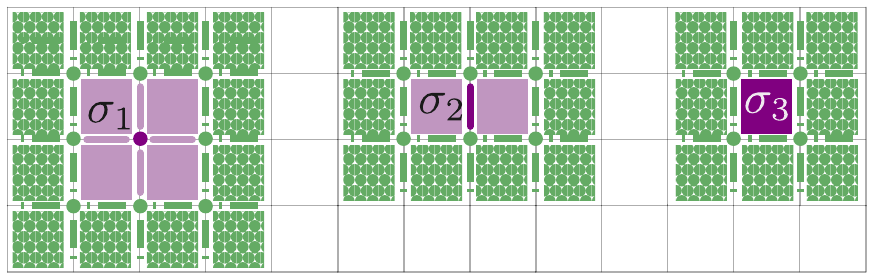}
    \caption{Examples of open sets for $d=2$. The basic open $S_\sigma$ is given in light purple for each cell type shown in dark purple (vertex $\sigma_1$, edge $\sigma_2$ and square $\sigma_3$). The 1-thickening of the open sets are given by including the green portion. }
    \label{fig:thickenings}
\end{figure}
Given any set $S \in \Open(\cU)$, the 1-thickening\footnote{We note that this definition is distinct from the morphological operation of thickening that is a tool in image processing, although quite similar to the concept of dilation from that field.}  is defined by taking the downward closure of the upper closure  of $S$,  written as 
$   S^1 = (S^{\uparrow})^\downarrow$.
This operation can be thought of as taking the star of the closure of the set; see Fig.~\ref{fig:thickenings} for examples. 
The $n$-thickening is defined to be $n$ repetitions of the process given recursively as
\begin{equation*}
    S^n = 
    \begin{cases}
    S & n = 0\\
    (S^{n-1})^{\uparrow\downarrow} & n \geq 1.
    \end{cases}
\end{equation*}
Each $S^n$ is itself an open set in $\Open(\cU)$, and if $S \subseteq T$, then $S^n \subseteq T^n$. 
Thus we can view this operation as a functor on the category $\Open(\cU)$ with morphisms given by inclusion:
\begin{equation*}
    \begin{matrix}
    (-)^n: & \Open(\cU)  & \to &  \Open(\cU)\\
    & S & \mapsto & S^n.
    \end{matrix}
\end{equation*}
In Sec.~\ref{sec:technicalProofs}, 
we show that $(-)^n$ is a functor.
Because of the cosheaf assumption, we also have that $F(S^n) = \varinjlim_{S_\sigma \subset S^n}F(S_\sigma)$. 
Another useful property of this thickening, proved in Sec.~\ref{sec:technicalProofs}, is described in Lem.~\ref{lem:composedthickenings}.
\begin{restatable}{lemma}{composedThickenings}
\label{lem:composedthickenings}
For any $n, n' \geq 0$ and $S \in \Open(\cU)$, 
\[(S^{n})^{n'} = S^{n+n'}.\] 
\end{restatable}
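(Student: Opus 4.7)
The plan is to proceed by induction on $n'$ with $n$ fixed, using the recursive definition of the thickening operator directly. No deep structural fact about the poset $(\cU,\subseteq)$ is required; the statement is essentially a formal consequence of how the iteration was defined.

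For the base case $n'=0$, the definition gives $(S^n)^0 = S^n = S^{n+0}$, so there is nothing to check.

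For the inductive step, assume $(S^n)^{n'} = S^{n+n'}$. By the recursive clause applied to the open set $S^n \in \Open(\cU)$, we have
\begin{equation*}
(S^n)^{n'+1} = \bigl((S^n)^{n'}\bigr)^{\uparrow\downarrow}.
\end{equation*}
Plugging in the inductive hypothesis on the right, this equals $(S^{n+n'})^{\uparrow\downarrow}$, which is exactly $S^{(n+n')+1} = S^{n+(n'+1)}$ by the recursive clause applied to $S$. This closes the induction.

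The only mild subtlety worth flagging — though not an obstacle — is that the recursion is defined on the second index only, so iterating from $S$ and iterating from $S^n$ must be reconciled; the induction above accomplishes exactly this. I would not expect any difficulty, but one should also note in passing that $S^n$ is indeed an open set in $\Open(\cU)$ (stated in the surrounding text), so the recursive definition legally applies to it, making the substitution step valid.
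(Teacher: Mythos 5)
Your proof is correct, but it takes a genuinely different route from the paper's. The paper argues via the combinatorial path characterization introduced just before the lemma: $U_\sigma \in T^m$ iff there is a zigzag of cells $\tau \geq \gamma_1 \leq \tau_1 \geq \cdots \geq \gamma_m \leq \sigma$ with $U_\tau \in T$. The paper then decomposes membership in $(S^n)^{n'}$ into a length-$n'$ zigzag landing in $S^n$, unpacks that into a further length-$n$ zigzag landing in $S$, and concatenates to get a length-$(n+n')$ zigzag, giving both inclusions at once. You instead run a straight induction on $n'$ using only the recursive clause $T^m = (T^{m-1})^{\uparrow\downarrow}$, treating the statement as a formal identity about iterated application of $(-)^{\uparrow\downarrow}$; this is more elementary and requires no combinatorial reasoning about cells. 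Each approach buys something: yours is shorter, self-contained, and makes clear that the lemma has nothing to do with the cubical structure per se (it would hold for iteration of any endofunction on a set), while the paper's proof reuses and reinforces the zigzag picture that recurs in later arguments. Your flagging of the need for $S^n \in \Open(\cU)$ (so the recursion legally applies to it) is exactly the right hypothesis to check, and the paper's Lem.~\ref{lem:thickeningIsFunctor} together with the surrounding text supplies it.
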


We can use this thickening to build an interleaving distance on cosheaves of the form 
${F:\Open(\cU) \to \Set}$.
The first necessary ingredient is the composition of functors $F \circ ( - )^n: \Open(\cU) \to \Set$, which we  denote by $F^n$. 
This means $F^n(S) = F(S^n)$, followed by a similar setup for $G^n$. 
With this notation, an interleaving is a pair of natural transformations $\phi:F \Rightarrow G^n$ and $\psi:G \Rightarrow F^n$, so a component of $\phi$ is a set-map $\phi_S:F(S) \to G(S^n)$. 
There is another component at $S^n$, $\phi_{S^n}:F(S^n) \to G(S^{2n})$, which can also be viewed as a component of a different natural transformation $\phi^n:F^n \Rightarrow G^{2n}$. 
For this reason, we use the notation $\phi_{S^n}$ and $\phi_S^n$ interchangeably when $\phi$ is indeed a natural transformation.\footnote{We are implicitly using Lem.~\ref{lem:composedthickenings} to write the maps this way.}

We are now ready to introduce our notion of interleaving distance. 
\begin{definition}
\label{def:interleavingDistance}
Given cosheaves $F,G:\Open(\cU) \to \Set$ and $n \geq 0$, an \emph{$n$-interleaving} is a pair of natural transformations 
$\phi:F \Rightarrow G^n$
and 
$\psi:G \Rightarrow F^n$
such that the diagrams 
\begin{equation*}
    \begin{tikzcd}
        F(S) 
            \ar[rr, "{F[S \subseteq S^{2n}]}"]   
            \ar[dr, "\phi_S"',violet] 
            & & F(S^{2n}) & 
        & F(S^n) \ar[dr]
            \ar[dr, "\phi_{S^{n}}",violet]
        & \\
        & G(S^n)\ar[ur, "\psi_{S^n}"', orange]  & & 
        G(S) 
            \ar[rr, "{G[S \subseteq S^{2n}]}"']
            \ar[ur, "\psi_{S}", orange]  
        && G(S^{2n})
    \end{tikzcd}
\end{equation*} 
commute for all $S \in \Open(\cU)$.
The interleaving distance is given by 
\begin{equation*}
    d_I(F,G) = \inf\{ n \geq 0 \mid \text{there exists an $n$-interleaving} \}, 
\end{equation*}
and is set to be $d(F,G) = \infty$ if there is no interleaving for any $n$.
\end{definition}
As shown in Sec.~\ref{sec:technicalProofs}, this definition fits in the framework built by Bubenik et al.~\cite{Bubenik2014a} and thus it is an extended pseudometric. 

\section{Loss Function and Bounds}
\label{sec:loss-function}

In this section, we introduce a loss function for interleavings on $\R^d$-mapper complexes.  
We give the definition of the loss function (Defn.~\ref{def:Loss_v1}) in Sec.~\ref{ssec:LossFunction}, and present our first version of the bound as Thm.~\ref{thm:bound} in Sec.~\ref{ssec:bound_v1}.
However, this version of the bound requires checking diagrams for all possible open sets $S \in \Open (K)$ which creates a combinatorial explosion that is counterproductive in practice.
Thus, in Sec.~\ref{ssec:BasisBound}, we prove this loss function can be replaced with an improved loss function which only needs to check the open sets for a basis of $\Open(\cU)$. 

\subsection{Loss Function Definition}
\label{ssec:LossFunction}
We start by turning each non-empty $F(S)$ (similarly $G(S)$) into a metric space, as follows.
\begin{definition}
Define the distance $d_S^{F}(A,B)$ for $A,B \in F(S)$ to be the smallest $n$ such that $A$ and $B$ represent the same connected component when included into $S^n$. 
Specifically,
\begin{equation*}
    d_S^{F}(A,B) = 
    \min\{ n\geq0 \mid F[S \subset S^n](A) = F[S \subset S^n](B)\}. 
\end{equation*}
If no such $n$ exists, then we  set $d_S^{F}(A,B) = \infty$.
\end{definition}

It is easy to see that this definition satisfies the definition of an extended metric. 
Indeed, it is actually an extended ultrametric since $d_S^F(A,C) \leq \max \{ d_S^F(A,B) , d_S^F(B,C) \}$, although we will not need that additional structure here.

Consider the example of Fig.~\ref{fig:examplegraph-distance} with a single input graph encoded by a cosheaf  $F:\Open(\cU) \to \Set$. 
The set $F(S)$ has two elements, which we denote by $A$ and $B$ as they represent the connected components containing the points $a$ and $b$ respectively. 
Then $d_S^F(A,B) = 1$, since thickening the set $S$ by $1$ puts $a$ and $b$ in the same connected component.
Likewise, denoting the elements of $F(T)$ by $W$ and $Z$, we see that $d_T^F(W,Z) = 2$ since we must expand the set $T$ twice before $w$ and $z$ are in the same connected component. 
\begin{figure}
    \centering
    \includegraphics[width = 0.4\textwidth]{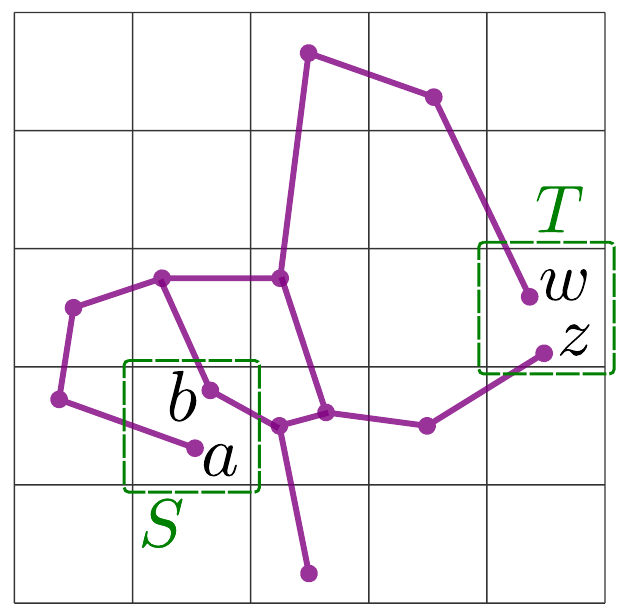}
    \caption{An example geometric graph used to calculate example distances in $d_S^F$ and $d_T^F$. }
    \label{fig:examplegraph-distance}
\end{figure}

As a first useful property of this distance, thickening a set implies that the distance between components will only decrease.
For an example, consider $W,Z \in F(T)$ representing points $w$ and $z$ in Fig.~\ref{fig:examplegraph-distance}.
As noted previously, $d_T^F(W,Z) = 2$. 
However, if the elements ${W',Z' \in F(T^1)}$ represent the connected components in the 1-thickening of $T$, then $d_{T^1}^F(W',Z') = 1$, and in particular, $d_T^F(W,Z) \geq d_{T^1}^F(W',Z')$. 
This idea is formalized in the following lemma:

\begin{lemma}
\label{lem:distanceContraction}
Fix $k \geq 0$ and any $A,B \in F(S)$ with images $A' = F[S\subseteq S^k](A)$ and \linebreak $B' = F[S\subseteq S^k](B)$ in $F(S^k)$. 
Then 
\begin{equation*}
    d_{S^k}^F(A',B') = \max\{0,  d_S^F(A,B) - k \} =
    \begin{cases}
        0 & \text{if } k \geq n\\
        d_S^F(A,B) - k & \text{if }0 \leq k <n
    \end{cases}
\end{equation*}
and in particular, $d_S^F(A,B) \geq d_{S^k}^F(A',B')$. 
\end{lemma}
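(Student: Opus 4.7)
The plan is to set $n = d_S^F(A,B)$ and split into the two cases $k \geq n$ and $k < n$, using Lem.~\ref{lem:composedthickenings} to rewrite thickenings of thickenings, together with functoriality of $F$ to factor the inclusion-induced maps.

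First I would observe that Lem.~\ref{lem:composedthickenings} gives $(S^k)^j = S^{k+j}$ for every $j \geq 0$, so $F[S^k \subseteq (S^k)^j]$ is literally the map $F[S^k \subseteq S^{k+j}]$. Functoriality of $F$ applied to the chain $S \subseteq S^k \subseteq S^{k+j}$ then yields the key factorization
\begin{equation*}
    F[S \subseteq S^{k+j}] \;=\; F[S^k \subseteq S^{k+j}] \circ F[S \subseteq S^k],
\end{equation*}
which is the only algebraic tool used below.

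For the case $k \geq n$, apply this factorization at $j = 0$ after first moving through $S^n$: since $F[S \subseteq S^n](A) = F[S \subseteq S^n](B)$ by definition of $n$, functoriality along $S \subseteq S^n \subseteq S^k$ gives $A' = F[S^n \subseteq S^k] \circ F[S \subseteq S^n](A) = F[S^n \subseteq S^k] \circ F[S \subseteq S^n](B) = B'$, so $d_{S^k}^F(A',B') = 0$, matching $\max\{0, n-k\}$.

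For the case $0 \leq k < n$, I would prove the two inequalities separately. The upper bound $d_{S^k}^F(A',B') \leq n-k$ follows by applying the factorization with $j = n-k$: the equality $F[S \subseteq S^n](A) = F[S \subseteq S^n](B)$ pushes down to $F[S^k \subseteq S^{k+(n-k)}](A') = F[S^k \subseteq S^{k+(n-k)}](B')$, so $(S^k)^{n-k}$ is a valid witness. For the lower bound, suppose for contradiction $d_{S^k}^F(A',B') = m$ with $m < n-k$; then $F[S^k \subseteq S^{k+m}](A') = F[S^k \subseteq S^{k+m}](B')$, and composing with $F[S \subseteq S^k]$ yields $F[S \subseteq S^{k+m}](A) = F[S \subseteq S^{k+m}](B)$ with $k+m < n$, contradicting minimality of $n$. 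Combining gives the claimed equality, and the final inequality $d_S^F(A,B) \geq d_{S^k}^F(A',B')$ is immediate from $\max\{0, n-k\} \leq n$.

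The main obstacle is simply being careful with bookkeeping about which thickening level each element lives in and invoking Lem.~\ref{lem:composedthickenings} at the right moments; there is no real conceptual difficulty beyond making the functorial factorization explicit. One minor edge case to mention is $n = \infty$, where the statement should be read as $d_{S^k}^F(A',B') = \infty$ as well, since the same argument shows that any finite witness for $A',B'$ at level $S^k$ would pull back to a finite witness for $A,B$ at level $S$.
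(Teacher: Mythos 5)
Your proof is correct and takes essentially the same route as the paper: set $n = d_S^F(A,B)$, factor $F[S \subseteq S^{k+j}]$ through $F(S^k)$ using Lemma~\ref{lem:composedthickenings} and functoriality, and split on $k \geq n$ versus $k < n$. You spell out the $k < n$ case as two separate inequalities (upper bound by exhibiting a witness, lower bound by contradiction with minimality of $n$), which the paper compresses into one sentence, and you explicitly note the $n = \infty$ edge case; both are harmless elaborations of the same argument.
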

\begin{proof}
Let $n = d_S^F(A,B)$, so that we know the image of $A$ and $B$ in $F(S^n)$ is the same. 
If $k \geq n$, then we use the functor maps $F(S) \to F(S^n) \to F(S^k)$ to see that the images of $A$ and $B$ are the same in $F(S^n)$ so they are the same in $F(S^k)$. 
Then $d_{S^k}^F(A',B') = 0$. 
If $k < n$, then we have the maps $F(S) \to F(S^k) \to F(S^n)$.
Because we know that $A$ and $B$ do not map to the same thing prior to $n$, we have $d_{S^k}^F(A',B') = n-k$, completing the proof.
\end{proof}

We use this framework as follows: first, assume we are given $F$ and $G$ but our attempts at finding an interleaving  do not necessarily satisfy the requirements of a natural transformation. 
Normally, a natural transformation $\eta:H \Rightarrow H'$ is a collection of component morphisms ${\eta:H(S) \to H'(S)}$ which commute with the inclusions: 
\[
\begin{tikzcd}[sep=scriptsize]
	{H(S)} && {H(T)} \\
	\\
	{H'(S)} && {H'(T)}.
	\arrow["{H'[\subseteq]}", from=3-1, to=3-3]
	\arrow["{\eta_u}"', from=1-1, to=3-1]
	\arrow["{H[\subseteq]}", from=1-1, to=1-3]
	\arrow["{\eta_T}", from=1-3, to=3-3]
\end{tikzcd}
\]
The following definitions, inspired by~\cite{Robinson2020} and~\cite{nlab:unnatural_transformation}, give names to collections of component morphisms used to define an interleaving where the square might not commute. 

\begin{definition}
\label{def:assignment}
Given functors $H,H':\Open(\cU) \to Set$, an \emph{unnatural transformation}\footnote{A  natural transformation is an unnatural transformation which just happens to follow commutativity properties. In other words, natural and unnatural transformations are not mutually exclusive. This vocabulary follows from~\cite{nlab:unnatural_transformation} so we accept no responsibility for the linguistic implications.}   $\eta:H \rightarrow H'$ is a collection of maps $\eta_S:H(S) \to H'(S)$ with no additional promise of commutativity. 

For a fixed $n \geq 0$ and cosheaves $F$ and $G$, an \emph{assignment}, or more specifically an \emph{$n$-assignment}, is a pair of unnatural transformations $\phi:F \Rightarrow G^n$ and $\psi:G \Rightarrow F^n$.

\end{definition}

In order to clarify notation, for the remainder of the paper, we will be using $n$-assignments to  build $(n+k)$-interleavings, which by definition will be required to be natural transformations. 
When the $n$-assignment might not commute, we  denote its maps by lower case $\phi$ and $\psi$;  for $(n+k)$-assignments which are constructed to be natural transformations, we  denote them by $\Phi$ and $\Psi$. 

In addition, we assume for the remainder of the paper that $n$ is large enough for an assignment to exist. 
That is, it is possible that for some given $F(S)$, $G(S^n)$ might be empty for $n$ small enough and thus there is no available map from one to the other. 
However, because we have assumed a compact input, $f(\X)$ and $g(\Y)$ is contained in a compact interval, and thus, we have that the $\sigma$ for which  $F(S_\sigma)$  is not empty is contained in some interval (in the poset sense). 
So long as $n$ is large enough that the Hausdorff distance between the images $f(\X)$ and $g(\Y)$ is at most $\delta n$, $G(S^n)$ will be non-empty for any non-empty $F(S)$ (and vice versa).  

In the spirit of  \cite{Robinson2020}, we measure the quality of a choice of an  $n$-assignment $\phi, \psi$ using the collections of distances $\{d_S^F \mid S \in \Open(\cU)\}$ and $\{d_S^G \mid S \in \Open(\cU)\}$.  
First, note that checking that $\phi$ and $\psi$ are natural transformations means ensuring the diagrams
\begin{equation*}
    \begin{tikzcd}
        F(S)  
            \ar[r, "{F[\subseteq ]}"] 
            \ar[dr, "\phi_S"', very near start, violet]
        & F(T)
            \ar[dr, "\phi_T"', very near start, violet]
        & \\
        & G(S^n) 
            \ar[r, "{G[\subseteq ]}"'] 
        & G (T^n)
    \end{tikzcd}
    \begin{tikzcd}
        & F(S^n)
            \ar[r, "{F[\subseteq ]}"] 
        & F (T^n)\\
        G(S)
            \ar[r, "{G[\subseteq ]}"'] 
            \ar[ur, "\psi_S", very near start, orange]
        & G(T) 
            \ar[ur, "\psi_T", very near start, orange]
        & 
    \end{tikzcd}
\end{equation*}
commute. 
As we use them repeatedly, we will denote these diagrams by $\Parallelograml_\phi(S,T)$ and $\Parallelogramr_\psi(S,T)$, dropping the subscript when it is clear from context.
Then checking whether the pair constitutes an interleaving involves checking commutativity of the diagrams
\begin{equation*}
\label{eq:fourDiagrams}
\begin{tikzcd}
        F(S) 
            \ar[rr, "{F[S \subseteq S^{2n}]}"]   
            \ar[dr, "\phi_S"',violet] 
            & & F(S^{2n}) & 
        & F(S^n) \ar[dr]
            \ar[dr, "\phi_{S^{n}}",violet]
        & \\
        & G(S^n)\ar[ur, "\psi_{S^n}"', orange]  & & 
        G(S) 
            \ar[rr, "{G[S \subseteq S^{2n}]}"']
            \ar[ur, "\psi_{S}", orange]  
        && G(S^{2n})
    \end{tikzcd}
\end{equation*}
which we denote by $\triangled_{\phi,\psi}(S)$ and $\triangleu_{\phi,\psi}(S)$ respectively, again dropping the subscripts when unnecessary. 
We measure quality of the given assignments by checking how far these four diagrams are from commuting in the sense of the distances defined at the terminal vertex of the shape. 

\begin{definition}
\label{def:Loss_v1}
Fix an $n$-assignment
$(\phi,\psi)$. 
We define four \emph{diagram loss functions}: 
\begin{align*}
\Lpl^{S,T}(\phi)
    &= \max\limits_{\alpha \in F(S)} 
       d_{T^n}^{G}(
        \varphi_T \circ F[S \subseteq T](\alpha),
        G[S^n \subseteq T^n] \circ \varphi_S(\alpha)
        )\\
\Lpr^{S,T} (\psi)
    &= \max\limits_{\alpha \in G(S)} 
       d_{T^n}^{F}(
        \psi_T \circ G[S \subseteq T](\alpha), 
        F[S^n \subseteq T^n] \circ \psi_S(\alpha)
    )\\
\Ltd^S (\phi,\psi)
    &= \max\limits_{\alpha \in F(S)}  \Big \lceil \tfrac{1}{2} \cdot d_{S^{2n}}^{F}(
    F[S \subseteq S^{2n}] (\alpha),
    \psi_{S^n} \circ \varphi_S(\alpha)
    ) \Big \rceil\\
\Ltu^S (\phi,\psi)
    &= \max\limits_{\alpha \in G(S)}\Big \lceil \tfrac{1}{2} \cdot d_{S^{2n}}^{G}(
    G[S \subseteq S^{2n}](\alpha),
    \varphi_{S^n} \circ \psi_S(\alpha)
    )\Big \rceil.
\end{align*}
Then the loss for the given assignment is defined to be
\begin{equation*}
L(\phi,\psi) = \max_{S\subseteq T}\left\{\Lpl^{S,T}, \Lpr^{S,T}, \Ltu^S, \Ltd^S\right\}.
\end{equation*}
\end{definition}

These loss functions are defined in a way so that while the diagram in question might not commute, pushing $n$ forward by the loss value will send the elements to the same place. 
For example, if   $\Lpl^{S,T}(\phi)  =k$, then in the diagram 
\begin{equation}
\label{eqn:dgm:parallelExtendK}
\begin{tikzcd}
        F(S)  
            \ar[r, "{F[\subseteq ]}"] 
            \ar[dr, "\phi_S"', very near start, violet]
        & F(T)
            \ar[dr, "\phi_T"', very near start, violet]
        & \\
        & G(S^n) 
            \ar[r, "{G[\subseteq ]}"'] 
        & G (T^n) \ar[r] 
        & G(T^{n+k})
\end{tikzcd}
\end{equation}
the image of a point from $F(S)$ is the same in $G(T^{n+k})$ following both paths. 
Similarly, if $\Ltd^S (\phi,\psi)=k$, then in the diagram 
\begin{equation}
\label{eqn:dgm:triExtendK}
    \begin{tikzcd}
        F(S) 
            \ar[rr, "{F[ \subseteq ]}"]   
            \ar[dr, "\phi_S"',violet] 
            & & F(S^{2n})  \ar[r] 
            & F(S^{2(n+k)})
        \\
        & G(S^n)\ar[ur, "\psi_{S^n}"', orange]  & 
    \end{tikzcd}
\end{equation}
the image of a point in $F(S)$ is the same (following both paths) in $F(S^{2(n+k)})$ even if not in $F(S^{2n})$.

\para{An Example:} 
Consider Fig.~\ref{fig:nonzero_loss_finite} and fix $n=1$. 
Denote the connected component of the point $a$ in $F(S)$, $F(S^1)$, and $F(S^2)$ by $A$, $A'$, and $A''$, respectively.
Similarly, the connected component of the point $b$ is denoted by  $B'' \in G(S^{2})$. 
Follow the same form for the connected components of points $w$ and $z$ in $G$.
The interleaving diagrams can be collected together as 
\begin{equation}
\label{eq:interleavingLadder_example}
    \begin{tikzcd}[row sep=large, column sep=huge]
        {\color{blue}\{A\}}  
            \ar[r, "{F[S \subseteq S^1]}"] 
            \ar[dr, "\phi_S"', very near start, violet]
        & {\color{blue}\{A'\}} 
            \ar[r, "{F[S^1 \subseteq S^{2}]}"] 
            \ar[dr, "\phi_{S^n}"', very near start, violet]
        & {\color{blue}\{A'',B''\}} \\
        {\color{red}\{W,Z\}} 
            \ar[r, "{G[S \subseteq S^1]}"'] 
            \ar[ur, "\psi_S", very near start, orange, crossing over]
        & {\color{red}\{W',Z'\}}
            \ar[r, "{G[S^1 \subseteq S^{2}]}"'] 
            \ar[ur, "\psi_{S^1}", very near start, orange, crossing over]
        & {\color{red}\{W'',Z''\}}
    \end{tikzcd}
\end{equation}
noting that the horizontal maps are determined by sending a letter to the same letter with an additional prime. 
The distances between the points in their respective sets are
\begin{equation*}
    \begin{matrix}
    &&& d_{S^2}^F(A'',B'') = 1; \\
    \\
    & d_S^G(W,Z) = 3, & 
    d_{S^1}^G(W',Z')  = 2, & 
    d_{S^2}^G(W'',Z'') = 1. 
    \end{matrix}
\end{equation*}

\begin{figure}
    \centering
    \includegraphics[width = 0.4\textwidth]{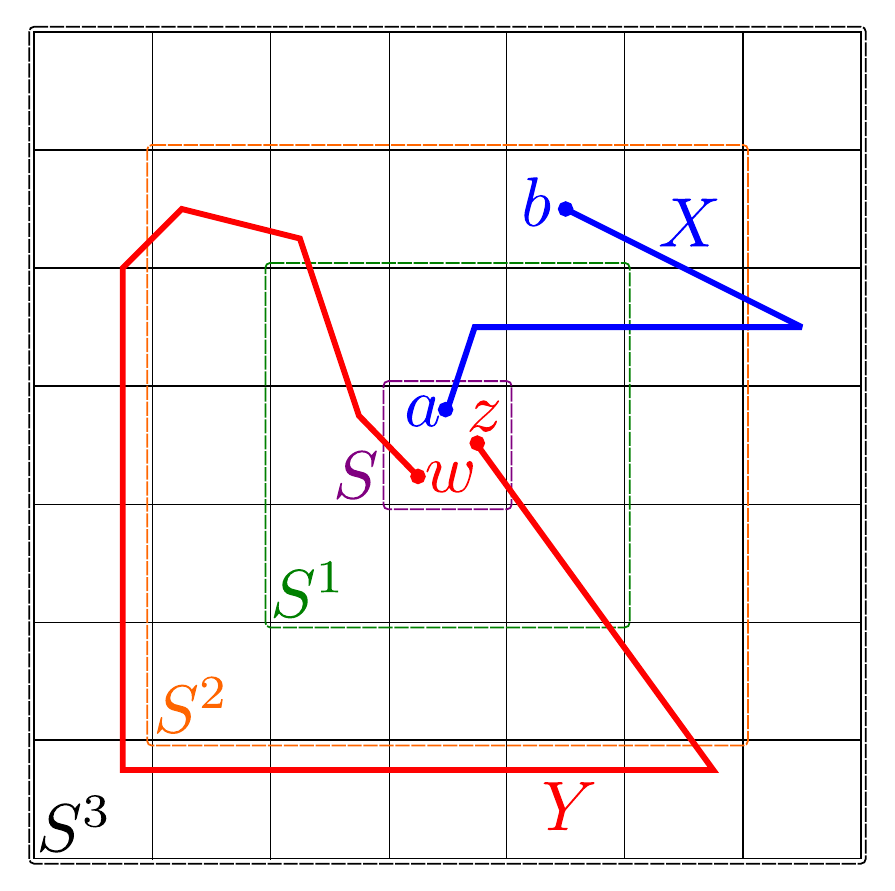}
    \caption{An example of two input geometric graphs, $X$ and $Y$.}
    \label{fig:nonzero_loss_finite}
\end{figure}
Consider the following example assignment:
\begin{equation*}
\begin{matrix}
    \phi_S: A \mapsto W',& & 
    \psi_S: W,Z \mapsto A',\\
    \phi_{S^1}:A' \mapsto W'',& &
    \psi_{S^1}: \substack{W' \mapsto A'' \\ Z' \mapsto B'' }.
\end{matrix}
\end{equation*}
In this case, we then have that 
$\Lpl^{S,S^n} = 0$,
$\Lpr^{S,S^n} = 1$, 
$\Ltd^S = 0 $, 
and $\Ltu^S= 1$, 
so again $L(\phi,\psi) \geq 1$.
For this particular example, no $n=1$ interleaving is possible so any choice of assignment will have a non-zero loss (the easiest check is to see that any choice of assignment will force $\Ltu^S =1$). 

\subsection{Bounding the Interleaving Distance}
\label{ssec:bound_v1}

We now use the loss function to give an upper bound for the interleaving distance.  
\begin{restatable}{theorem}{FirstLossBound}
\label{thm:bound} 
    For an $n$-assignment,  $\phi\colon F \Rightarrow G^n$ and $\psi\colon G \Rightarrow F^n$, 
    \begin{equation*}
        d_I(F, G) \leq  n+L(\phi, \psi).  
    \end{equation*}
\end{restatable}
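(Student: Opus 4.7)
Write $k = L(\phi,\psi)$. The plan is to build genuine natural transformations $\Phi : F \Rightarrow G^{n+k}$ and $\Psi : G \Rightarrow F^{n+k}$ by post-composing the given unnatural maps with the appropriate thickening morphisms, and then to check that the two parallelograms and two triangles required of an $(n{+}k)$-interleaving in Defn.~\ref{def:interleavingDistance} actually commute. The whole argument runs on one principle: whenever a loss bound guarantees that two images in some $H(W)$ lie at distance at most $m$, Lem.~\ref{lem:distanceContraction} turns thickening $W$ by $m$ into genuine equality.

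Concretely, define
\begin{equation*}
\Phi_S \;=\; G[S^{n}\subseteq S^{n+k}]\circ \phi_S,
\qquad
\Psi_S \;=\; F[S^{n}\subseteq S^{n+k}]\circ \psi_S
\end{equation*}
for each $S \in \Open(\cU)$. For naturality of $\Phi$, I would fix $S\subseteq T$ and chase an arbitrary $\alpha\in F(S)$ around $\Parallelograml_{\Phi}(S,T)$. Both paths land in $G(T^{n+k})$; their $\phi$-composites in $G(T^{n})$ are, by definition, at $d_{T^n}^G$-distance at most $\Lpl^{S,T}(\phi)\leq k$, so Lem.~\ref{lem:distanceContraction} applied along $T^{n}\subseteq T^{n+k}$ collapses the gap to zero. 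The argument for $\Psi$ via $\Lpr^{S,T}(\psi)\leq k$ is symmetric.

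For the triangle $\triangled_{\Phi,\Psi}(S)$, the two routes from $F(S)$ to $F(S^{2(n+k)})$ are
\begin{equation*}
F[S\subseteq S^{2(n+k)}](\alpha)
\quad \text{versus} \quad
F[S^{2n+k}\subseteq S^{2n+2k}]\circ \psi_{S^{n+k}}\circ G[S^n\subseteq S^{n+k}]\circ \phi_S(\alpha).
\end{equation*}
To compare them I would first invoke the parallelogram bound $\Lpr^{S^n,S^{n+k}}(\psi)\leq k$, which places $\psi_{S^{n+k}}\circ G[\subseteq](\phi_S\alpha)$ and $F[S^{2n}\subseteq S^{2n+k}]\circ\psi_{S^n}(\phi_S\alpha)$ at $d^F_{S^{2n+k}}$-distance at most $k$; pushing by the remaining $k$ thickening makes these two equal in $F(S^{2n+2k})$ by Lem.~\ref{lem:distanceContraction}. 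This reduces the triangle to comparing $F[S\subseteq S^{2n}](\alpha)$ and $\psi_{S^n}\circ\phi_S(\alpha)$ after pushing forward through $F[S^{2n}\subseteq S^{2n+2k}]$. The triangle loss bound $\Ltd^S(\phi,\psi)\leq k$, with its factor of one half and ceiling, exactly says these are at $d^F_{S^{2n}}$-distance at most $2k$, and thickening by $2k$ then collapses that distance to zero—here is the one place the ceiling and the doubling of $k$ in $S^{2(n+k)}$ are designed to match. The triangle $\triangleu_{\Phi,\Psi}(S)$ is handled identically using $\Lpl$ and $\Ltu$.

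The main obstacle, and the only place where notational care is essential, is the triangle step: one must thread the intermediate naturality failure of $\psi$ (a $\Lpr$ cost of $k$ measured in the single copy of $S^{n}$ thickened by $k$) through the doubled thickening in $\triangled$ (a $\Ltd$ cost of $k$ measured with the half-factor on $S^{2n}$), and verify that the two budgets fit together in $S^{2(n+k)}$ without overshoot. Once that accounting is done, the four loss inequalities $\Lpl,\Lpr\leq k$ and $\Ltu,\Ltd\leq k$ supply exactly enough slack to promote the $n$-assignment to an $(n{+}k)$-interleaving, whence $d_I(F,G)\leq n+k = n+L(\phi,\psi)$.
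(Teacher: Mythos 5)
Your proof is correct and takes essentially the same approach as the paper: you construct the same $\Phi_S = G[S^n \subseteq S^{n+k}]\circ\phi_S$ and $\Psi_S = F[S^n \subseteq S^{n+k}]\circ\psi_S$, use $\Lpl, \Lpr \leq k$ plus Lem.~\ref{lem:distanceContraction} for the parallelograms, and combine $\Lpr^{S^n,S^{n+k}}(\psi)\leq k$ with $\Ltd^S\leq k$ for the triangle. The paper simply factors this diagram-chase into a standalone technical lemma (Lem.~\ref{lem:lossimpliescommutes}) and then invokes it, whereas you inline that reasoning; the underlying argument, including the observation that the ceiling and the half-factor in $\Ltd$ are calibrated against the doubled thickening $S^{2(n+k)}$, is the same.
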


To prove this, we require the following technical lemma, proved in Sec.~\ref{sec:technicalProofs}.

\begin{restatable}{lemma}{lossimpliescommutes}
\label{lem:lossimpliescommutes}
Assume we are given an $n$-assignment
$\phi:F \Rightarrow G^n$ and 
$\psi:G \Rightarrow F^n$. 
For a fixed $k$, define $(n+k)$-assignments
$\Phi_S = G[S^n\subseteq S^{n+k}]\circ \phi_S$
and 
$\Psi_S = F[S^n\subseteq S^{n+k}]\circ \psi_S$ for all $S \in \Open(\cU)$. 
Then the following hold:
\begin{enumerate}
    \item $\Lpl^{S,T}(\phi) \leq k$ implies $\Parallelograml_{\Phi}(S,T)$ commutes, and thus $\Lpl^{S,T}(\Phi) = 0$.
    \item $\Lpr^{S,T}(\psi) \leq k$ implies $\Parallelogramr_{\Psi}(S,T)$ commutes, and thus $\Lpr^{S,T}( \Psi) = 0$.
    \item 
    $\Ltd^{S}(\phi,\psi) \leq k$ 
    and
    $\Lpr^{S^n,S^{n+k}}(\psi) \leq k$     imply $\triangled_{\Phi, \Psi}(S)$ commutes, and thus $\Ltd^{S}(\Phi, \Psi) = 0$.
    \item $\Ltu^{S}(\phi,\psi) \leq k$  and
    $\Lpl^{S^n,S^{n+k}}(\phi) \leq k$ 
    imply $\triangleu_{\Phi, \Psi}(S)$ commutes, and thus $\Ltu^{S}(\Phi, \Psi) = 0$.
\end{enumerate}
In particular, if  $\phi$ and $\psi$ have $L(\phi,\psi) = 0$, then $\phi$ and $\psi$ constitute an interleaving, and so $d_I(F,G) \leq n$.
\end{restatable}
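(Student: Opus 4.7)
The plan is to handle the four parts of the lemma in two tiers. Parts (1) and (2) are the easier ones: they follow almost directly from unpacking the definitions of $L_\Parallelograml$ and $L_\Parallelogramr$ together with the functoriality of $G$ (resp.\ $F$) on the thickening poset. Specifically, for (1), I would fix $\alpha \in F(S)$ and observe that the loss hypothesis $L_\Parallelograml^{S,T}(\phi)\leq k$ together with the definition of $d^G_{T^n}$ says exactly that the two outputs $\phi_T\circ F[S\subseteq T](\alpha)$ and $G[S^n\subseteq T^n]\circ \phi_S(\alpha)$ become identified once pushed into $G(T^{n+k})$. Now $\Phi$ is defined to be precisely $\phi$ postcomposed with that push, so after using $G[T^n\subseteq T^{n+k}]\circ G[S^n\subseteq T^n] = G[S^n\subseteq T^{n+k}] = G[S^{n+k}\subseteq T^{n+k}]\circ G[S^n\subseteq S^{n+k}]$, the two ways around $\Parallelograml_\Phi(S,T)$ agree. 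Part (2) is the analogous argument with the roles of $\phi,\psi,F,G$ swapped.

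For parts (3) and (4), which are the main obstacle, the difficulty is that the diagram $\triangled_{\Phi,\Psi}(S)$ uses $\Psi$ evaluated at the intermediate open set $S^{n+k}$, not $S^n$. So the triangle loss alone, which only tells us about $\psi_{S^n}\circ\phi_S$, is not enough — we also need to reconcile $\Psi_{S^{n+k}}\circ\Phi_S$ with $F[S^{2n}\subseteq S^{2(n+k)}]\circ\psi_{S^n}\circ\phi_S$, and that is exactly what the additional parallelogram hypothesis $L_\Parallelogramr^{S^n,S^{n+k}}(\psi)\leq k$ is there to provide. Concretely, I would first use the triangle bound, together with the $\lceil \tfrac{1}{2}(\cdot)\rceil$ in the definition (which gives $d^F_{S^{2n}}\leq 2k$), to conclude
\begin{equation*}
F[S\subseteq S^{2(n+k)}](\alpha) \;=\; F[S^{2n}\subseteq S^{2(n+k)}]\circ \psi_{S^n}\circ \phi_S(\alpha).
\end{equation*}
Then I would set $\beta = \phi_S(\alpha)\in G(S^n)$ and apply $L_\Parallelogramr^{S^n,S^{n+k}}(\psi)\leq k$ to $\beta$ together with Lem.~\ref{lem:composedthickenings} ($(S^n)^n = S^{2n}$, $(S^{n+k})^n = S^{2n+k}$) to obtain, after a further push by $k$,
\begin{equation*}
F[S^{2n}\subseteq S^{2(n+k)}]\circ \psi_{S^n}(\beta) \;=\; F[S^{2n+k}\subseteq S^{2(n+k)}]\circ \psi_{S^{n+k}}\circ G[S^n\subseteq S^{n+k}](\beta).
\end{equation*}
Unwinding the definition of $\Psi_{S^{n+k}}\circ \Phi_S$ shows the right-hand side is exactly this composite, chaining the two equalities yields commutativity of $\triangled_{\Phi,\Psi}(S)$. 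Part (4) is handled symmetrically, using $L_\triangleu^S$ and $L_\Parallelograml^{S^n,S^{n+k}}(\phi)\leq k$.

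The main bookkeeping hazard is keeping the superscripts $n$, $n+k$, $2n$, $2n+k$, $2(n+k)$ straight and repeatedly invoking Lem.~\ref{lem:composedthickenings} to collapse iterated thickenings; a careful notation for each $F$- or $G$-inclusion map (with explicit source and target superscripts) avoids silent errors. For the final sentence of the lemma, the case $k=0$ makes $\Phi=\phi$ and $\Psi=\psi$, and the hypothesis $L(\phi,\psi)=0$ by definition makes all four loss quantities, including the auxiliary parallelogram ones at $(S^n,S^n)$ required in parts (3) and (4), equal to zero; so all four diagrams of Defn.~\ref{def:interleavingDistance} commute at every $S$ and $(\phi,\psi)$ is an $n$-interleaving, giving $d_I(F,G)\leq n$.
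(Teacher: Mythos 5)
Your proposal is correct and takes essentially the same approach as the paper: parts (1) and (2) follow by pushing the two parallelogram paths forward by $k$ and using functoriality, and parts (3) and (4) hinge on exactly the same two-step argument the paper uses (first applying the triangle bound to relate $F[S\subseteq S^{2(n+k)}](\alpha)$ to $F[S^{2n}\subseteq S^{2(n+k)}]\circ\psi_{S^n}\circ\phi_S(\alpha)$, then applying the auxiliary parallelogram bound $\Lpr^{S^n,S^{n+k}}$ to bridge from $\psi_{S^n}$ to $\Psi_{S^{n+k}}$). The only stylistic difference is that you phrase it as algebraic manipulation of composites, using Lem.~\ref{lem:composedthickenings} explicitly, whereas the paper phrases the same argument as a diagram chase through the colored diagram~\eqref{eq:dgm_colors}; both are equally valid.
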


\begin{proof}[Proof of Thm.~\ref{thm:bound} ]
Set $k = L(\phi,\psi)$,
so by definition,  $\Lpl^{S,T}(\phi) \leq k$, $\Lpr^{S,T}(\psi) \leq k$, $\Ltd^{S}(\phi,\psi) \leq k$, and $\Ltu^{S}(\phi,\psi) \leq k$. 
As in Lem.~\ref{lem:lossimpliescommutes}, construct two $(n+k)$-assignments: 
$\Phi$ given by 
$\Phi_S = G[S^n \subseteq S^{n+k}] \circ \phi$,  and
$\Psi$ given by 
$\Psi_S = F[S^n \subseteq S^{n+k}] \circ \psi$.
By Lem.~\ref{lem:lossimpliescommutes}, this means the diagrams 
$\Parallelograml_{\Phi}(S,T)$,
$\Parallelogramr_{\Psi}(S,T)$,
$\triangled_{\Phi, \Psi}(S)$, and 
$\triangleu_{\Phi, \Psi}(S)$ 
commute for all pairs $S\subseteq T$. 
This implies that $\Phi$ and $\Psi$ are an $(n+k)$-interleaving, giving the theorem. 
\end{proof}

First, notice that this proof works by explicitly constructing an interleaving from a given $n$-assignment.
Second, we have no reason to believe that this bound is tight.
In particular, in Sec.~\ref{ssec:BasisBound} we improve the  bound by way of restricting the computation to the basis for the topology of $K$ but even that is depending on input quality and gives no guarantee.

We include one additional note about when this loss function can be promised to be finite. 
Define the diameter of a metric space to be the largest distance between points, which we denote by 
$
    \mathrm{diam}(X,d) = \sup \{ d(a,b) \mid a,b \in X\}
$,
and note that here, the $\sup$ can be replaced with a $\max$ since we are working in finite metric spaces.
To simplify statements, we define the diameter of the empty set to be zero.
\begin{lemma}
The loss function for an $n$-assignment $(\phi,\psi)$ is bounded above; specifically,
\begin{align*}
    L(\phi,\psi) \leq 
    \max \Bigg( &
    \left\{\mathrm{diam}(F(S^{k}),d_F^{S^{k}}) \mid S \in \Open(\cU), k \in \{ n, 2n\}\right\} \\
    &\cup \left\{\mathrm{diam}(G(S^{k}),d_G^{S^{k}})\mid S \in \Open(\cU), k \in \{ n, 2n\}\right\}
    \Bigg) .
\end{align*}
In particular, if the inputs come from $f:\X\to\R$ and $g:\Y\to\R$ with both $\X$ and $\Y$ connected, then $L(\phi,\psi)$ is finite.
\end{lemma}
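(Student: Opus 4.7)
The plan is to bound each of the four diagram loss functions separately by the diameter of its target metric space, then observe that each target space appears in the collection on the right-hand side, and finally argue finiteness from the finiteness of $K$ together with the connectedness of $\X$ and $\Y$.

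First I would unpack the four definitions in \Cref{def:Loss_v1}. The value $\Lpl^{S,T}(\phi)$ is a maximum over $\alpha \in F(S)$ of distances computed with $d^G_{T^n}$ between two elements of $G(T^n)$; in particular it is at most $\mathrm{diam}(G(T^n),d^G_{T^n})$. By the analogous argument, $\Lpr^{S,T}(\psi) \leq \mathrm{diam}(F(T^n),d^F_{T^n})$. For the triangle terms, each value is of the form $\lceil \tfrac{1}{2} d \rceil$ where $d$ is a distance in $F(S^{2n})$ or $G(S^{2n})$, and since $\lceil x/2\rceil \leq x$ for any non-negative integer $x$, we get $\Ltd^S(\phi,\psi) \leq \mathrm{diam}(F(S^{2n}),d^F_{S^{2n}})$ and $\Ltu^S(\phi,\psi) \leq \mathrm{diam}(G(S^{2n}),d^G_{S^{2n}})$. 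Taking the maximum over all pairs $S \subseteq T \in \Open(\cU)$, each of these diameters appears as $\mathrm{diam}(F(S^k),d^F_{S^k})$ or $\mathrm{diam}(G(S^k),d^G_{S^k})$ with $k \in \{n,2n\}$ (for the parallelogram terms, take the variable $T$ as the indexing $S$ on the right-hand side with $k=n$), yielding the claimed inequality.

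For the finiteness statement, I would use that $K$ is a finite cubical complex, so $\cU$ is a finite poset and $\Open(\cU)$ is a finite collection of open sets. Thus only finitely many diameters appear on the right-hand side, and it suffices to show that each is finite. Given $S \in \Open(\cU)$, \Cref{lem:composedthickenings} and finiteness of $K$ imply that there is some $m \geq 0$ with $S^m = \cU$, so $|S^m| \supseteq f(\X)$ and therefore $f\inv(|S^m|) = \X$. Consequently $F(S^m) = \pi_0(\X)$, which has a single element when $\X$ is connected. Applying \Cref{lem:distanceContraction} (or directly the definition of $d^F_{S^k}$) to any pair $A,B \in F(S^k)$, both images collapse by the time we reach $S^m$, so $d^F_{S^k}(A,B) \leq m - k < \infty$. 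Thus each diameter is finite, and the same argument applied to $G$ and $\Y$ handles the other half. A finite maximum of finite quantities is finite, giving $L(\phi,\psi) < \infty$.

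The proof is essentially a definition chase, so I do not expect a serious obstacle. The only mildly delicate point is checking that the $\lceil \cdot / 2\rceil$ in the triangle losses does not prevent bounding by the (un-halved) diameter; this is immediate but worth noting because it makes the stated bound slightly loose. For the finiteness half, the one observation to make explicitly is that thickening eventually saturates to all of $\cU$, which follows from $K$ being finite and from \Cref{lem:composedthickenings}.
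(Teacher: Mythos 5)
Your proposal is correct and follows essentially the same route as the paper: bound each parallelogram loss by the diameter of the target space $G(T^n)$ or $F(T^n)$, bound each triangle loss by the diameter of $F(S^{2n})$ or $G(S^{2n})$ (using $\lceil x/2\rceil \le x$), and for finiteness use that connectedness forces all components to merge under sufficient thickening. Your version is somewhat more explicit than the paper's (which omits the ceiling remark and the saturation argument $S^m = \cU$), but there is no substantive difference in approach.
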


\begin{proof}
The parallelogram portions of the loss function $\Lpl$ and $\Lpr$ take values from distances in $F(S^n)$ and $G(S^n)$. 
The triangle portions $\Ltd$ and $\Ltu$ take values  from distances in $F(S^{2n})$ and $G(S^{2n})$. 
So, the maximum for the loss function must be attained on one of these sets, giving the inequality.
For the second statement, if the input graphs each have a single connected component, then any pair of elements $a, b \in F(S)$ map to the same element under the inclusion $F(S) \to F(S^K)$ for a large enough $K$. 
This in turn implies that the diameter of $d_S^F$ is finite for every $S$. 
\end{proof}

\begin{figure}
    \centering
    \includegraphics[width = 0.4\textwidth]{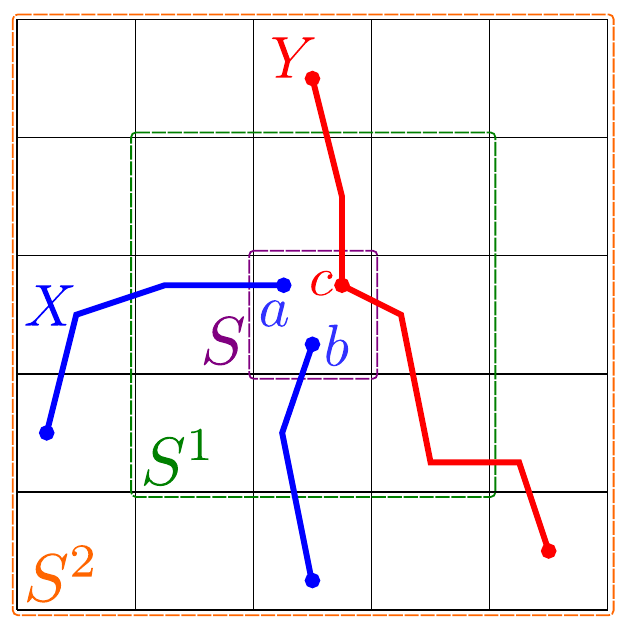}
    \caption{An example of the comparison of two geometric graphs with different numbers of connected components. 
    In this case, because $X$ has one connected component and $Y$ has two, the loss function will be infinite for any assignment. }
    \label{fig:infiniteLoss}
\end{figure}
Consider the example in Fig.~\ref{fig:infiniteLoss}.
Let $\{A,B\}$, $\{A',B'\}$, and $\{A'',B''\}$ be the representatives of the connected components of the points $a$ and $b$ in $F(S)$, $F(S^1)$ and $F(S^2)$ respectively. 
Because there is no $n$ for which the two points are the same connected component of $X$, the distance between $A$ and $B$ is $\infty$ in all three sets. 
Then no matter the choice of $1$-assignment, $\Ltd = \infty$, making the loss function infinite. 

\subsection{Restriction to Basis Elements}
\label{ssec:BasisBound}
We have so far measured the loss function by studying all possible open sets $S$. 
While this is helpful for definitions, it does not make for a reasonable computational setting. 
To that end, we now focus on a basis of the topology, and prove that this basis suffices.

Recall that an open set $S_\sigma\in \Open(\cU)$ (Eqn.~\eqref{eq:S_sigma}) given by the downset of $U_\sigma$ for some $\sigma \in K$ is called a \emph{basic open set}. 
Note that  $\{S_\sigma \mid \sigma \in K \}$ is a basis for the Alexandroff topology. 
We next give a name to the case where we are only given $n$-assignment information for basis elements, or equivalently, if we are given a full assignment but ignore the maps for non-basis open sets.
\begin{definition}
A \emph{basis unnatural transformation} for functors $H$ and $H'$ is a collection of maps $\eta_{S_\sigma}:H(S_\sigma) \to H'(S_\sigma)$ for all basis elements $S_\sigma$ from $\sigma \in K$. 
A \emph{basis $n$-assignment} (or simply a basis assignment) is a pair of basis unnatural transformations
$$
\{\phi_{S_\sigma} :F(S_\sigma) \to G(S^n_\sigma) \mid \sigma \in K\} 
\qquad \text{and}\qquad 
\{\psi_{S_\sigma} :G(S_\sigma) \to F(S^n_\sigma) \mid \sigma \in K\} 
$$
\end{definition}

In this section, we  prove that we can focus our loss function efforts on only those diagrams associated to basic opens, and the solution can be extended to any open set.
\begin{definition}
\label{def:basisLoss}
    The \emph{basis loss function} is defined to be 
\begin{equation*}
L_B(\phi,\psi) = \max_{\sigma \leq \tau}
\left\{
\Lpl^{S_\tau, S_\sigma}, \Lpr^{S_\tau, S_\sigma}, \Ltu^{S_\sigma}, \Ltd^{S_\sigma}
\right\}.
\end{equation*}
\end{definition}
It is immediate from the definitions that $L_B \leq L$ as the $L_B$ maximum is taken over a subset of those used to determine $L$. 
This means, in particular, that if $L=0$ then $L_B = 0$. 
These values are not always equal; for instance, we might have chosen a basis assignment for which every diagram commutes (making $L_B = 0$), but $\phi_T$ defined on non-basis elements causes a non-zero loss function so $L >0$. 
However in the special case where $L_B = 0$, and thus the basis open diagrams are commutative, we do have the ability to extend the information checked to a full interleaving. 
This can be seen in the following lemma, proved in Sec.~\ref{sec:technicalProofs}. 

\begin{restatable}{lemma}{extendToNatTrans}
\label{lem:extendToNatTrans}
Given a basis unnatural transformation
\begin{equation*}
\{\Phi_{S_\sigma}: F(S_\sigma) \to G(S_\sigma^N) \mid \sigma \in K\} 
\end{equation*}
with $\Lpl^{S_\tau, S_\sigma} = 0$ for all $\sigma \leq \tau$, we can extend this to a full natural transformation $\Phi$; i.e.~we can define $\Phi_S$ for all $S$ such that $\Lpl^{S,T} = 0 $ for all $S \subseteq T$. 
\end{restatable}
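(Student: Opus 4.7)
The strategy is to use the cosheaf property of $F$ to extend the basis data by a universal property argument. Since $F$ is a cosheaf on $\Open(\cU)$, for any open set $S$ we have
\[
F(S) \;\cong\; \varinjlim_{S_\sigma \subseteq S} F(S_\sigma),
\]
the colimit being indexed by the poset of basic opens contained in $S$ with structure maps $F[S_\tau \subseteq S_\sigma]$. I will produce a cocone under this diagram landing in $G(S^N)$, invoke the universal property to define $\Phi_S$, and then verify naturality for arbitrary $S \subseteq T$.

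For each basic open $S_\sigma \subseteq S$, define a candidate leg
\[
\lambda_\sigma^S \;:=\; G[S_\sigma^N \subseteq S^N]\circ \Phi_{S_\sigma} \;\colon\; F(S_\sigma)\to G(S^N),
\]
which is well-defined because monotonicity of the thickening functor together with Lem.~\ref{lem:composedthickenings} gives $S_\sigma^N \subseteq S^N$. To see that $\{\lambda_\sigma^S\}$ forms a cocone, I need $\lambda_\sigma^S \circ F[S_\tau \subseteq S_\sigma] = \lambda_\tau^S$ for every pair of basic opens $S_\tau \subseteq S_\sigma$ in $S$. The hypothesis $\Lpl^{S_\tau,S_\sigma}(\Phi) = 0$ says precisely that $\Phi_{S_\sigma}\circ F[S_\tau \subseteq S_\sigma] = G[S_\tau^N \subseteq S_\sigma^N]\circ \Phi_{S_\tau}$; post-composing both sides with $G[S_\sigma^N \subseteq S^N]$ and using functoriality of $G$ to collapse $G[S_\sigma^N \subseteq S^N]\circ G[S_\tau^N \subseteq S_\sigma^N] = G[S_\tau^N \subseteq S^N]$ yields the required cocone identity. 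The universal property then produces a unique $\Phi_S \colon F(S)\to G(S^N)$ through which every $\lambda_\sigma^S$ factors; when $S = S_\sigma$ is itself basic, $S_\sigma$ is terminal in its own indexing diagram, so this reproduces the given $\Phi_{S_\sigma}$.

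To finish, I verify that $\Lpl^{S,T}(\Phi) = 0$ for every $S \subseteq T$ in $\Open(\cU)$. Both composites $\Phi_T \circ F[S \subseteq T]$ and $G[S^N \subseteq T^N] \circ \Phi_S$ are maps out of the colimit $F(S)$, so equality can be tested leg-by-leg: precompose with each $F[S_\sigma \subseteq S]$ for basic $S_\sigma \subseteq S$. Functoriality of $F$ collapses $F[S \subseteq T]\circ F[S_\sigma \subseteq S]$ to $F[S_\sigma \subseteq T]$, and the factoring property of $\Phi_T$ at the basic open $S_\sigma \subseteq T$ reduces the left-hand side to $G[S_\sigma^N \subseteq T^N]\circ \Phi_{S_\sigma}$. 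On the right, the factoring property of $\Phi_S$ followed by functoriality of $G$ reduces $G[S^N \subseteq T^N]\circ G[S_\sigma^N \subseteq S^N]\circ \Phi_{S_\sigma}$ to the same expression. Uniqueness of the colimit factorization then forces the two composites to agree on all of $F(S)$. The main obstacle is really bookkeeping rather than conceptual: one must carefully track the order-reversal between the face relation $\sigma \leq \tau$ on cells and the corresponding inclusion $S_\tau \subseteq S_\sigma$ of basic opens, and ensure that the thickening functor intertwines these inclusions in the correct direction throughout.
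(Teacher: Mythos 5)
Your proof is correct and follows essentially the same route as the paper's. Both arguments exploit the cosheaf property $F(S)\cong\varinjlim_{S_\sigma\subseteq S}F(S_\sigma)$ (which the paper invokes in Sec.~\ref{ssec:thickenings} and uses again in its proof via the equivalent coequalizer formulation over pairwise intersections), define the cocone legs by post-composing $\Phi_{S_\sigma}$ with $G[S_\sigma^N\subseteq S^N]$, deduce the cocone identities from the vanishing of $\Lpl$ on nested basic opens, obtain $\Phi_S$ from the universal property, and then verify naturality for an arbitrary inclusion $S\subseteq T$ by testing leg-by-leg out of the colimit.
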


Note that the symmetric version extending a basis unnatural transformation $\Psi$ to a natural transformation $\Psi: G \Rightarrow F^N$ is obtained in exactly the same way. 
Next, we can take these natural transformations and ensure the triangles commute (thus giving an interleaving) by only checking the basis set triangles, again proved in Sec.~\ref{sec:technicalProofs}.

\begin{restatable}{lemma}{extendTriangles}
\label{lem:extendTriangles}
    Given natural transformations $\Phi:F \Rightarrow G^N$ and $\Psi:G^N \Rightarrow F$ such that $\Ltd^{S_\sigma} = 0$ for all $\sigma \in K$, then $\Ltd^{S} = 0$ for all open sets $S$. 
\end{restatable}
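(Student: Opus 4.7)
The plan is to use the cosheaf structure of $F$ to reduce the triangle commutativity at an arbitrary $S$ to the basic open case, and then lift the resulting identity up to $F(S^{2N})$ using the naturality of $\Phi$ and $\Psi$ that is already assumed.

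Fix $S \in \Open(\cU)$ and any $\alpha \in F(S)$. Because $F$ is a cosheaf and the basic opens $\{S_\sigma \mid S_\sigma \subseteq S\}$ form a cover of $S$, the cosheaf axiom yields $F(S) = \varinjlim F(S_\sigma)$, so $\alpha$ is represented by some $\alpha_\sigma \in F(S_\sigma)$ with $F[S_\sigma \subseteq S](\alpha_\sigma) = \alpha$. Applying the basis hypothesis $\Ltd^{S_\sigma} = 0$, we get the equality $F[S_\sigma \subseteq S_\sigma^{2N}](\alpha_\sigma) = \Psi_{S_\sigma^N}(\Phi_{S_\sigma}(\alpha_\sigma))$ in $F(S_\sigma^{2N})$. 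The next step is to push this identity forward through $F[S_\sigma^{2N} \subseteq S^{2N}]$, which is well-defined since $S_\sigma \subseteq S$ implies $S_\sigma^{2N} \subseteq S^{2N}$.

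On the left, functoriality of $F$ collapses $F[S_\sigma^{2N} \subseteq S^{2N}] \circ F[S_\sigma \subseteq S_\sigma^{2N}]$ to $F[S \subseteq S^{2N}] \circ F[S_\sigma \subseteq S]$, producing $F[S \subseteq S^{2N}](\alpha)$. On the right, naturality of $\Psi$ along the inclusion $S_\sigma^N \subseteq S^N$ (using the identifications $(S_\sigma^N)^N = S_\sigma^{2N}$ and $(S^N)^N = S^{2N}$ from Lem.~\ref{lem:composedthickenings}) rewrites $F[S_\sigma^{2N} \subseteq S^{2N}] \circ \Psi_{S_\sigma^N}$ as $\Psi_{S^N} \circ G[S_\sigma^N \subseteq S^N]$, and then naturality of $\Phi$ along $S_\sigma \subseteq S$ rewrites $G[S_\sigma^N \subseteq S^N] \circ \Phi_{S_\sigma}$ as $\Phi_S \circ F[S_\sigma \subseteq S]$. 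Chaining these, the right-hand side collapses to $\Psi_{S^N}(\Phi_S(\alpha))$. Therefore $F[S \subseteq S^{2N}](\alpha) = \Psi_{S^N}(\Phi_S(\alpha))$, which is exactly the statement that $\triangled_{\Phi,\Psi}(S)$ commutes on $\alpha$; since $\alpha$ was arbitrary, $\Ltd^S = 0$.

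The only essential ingredient beyond a diagram chase is the cosheaf reduction giving a basic open representative of $\alpha$; once this is in hand, the rest is a routine application of functoriality together with the two naturality squares. The main bookkeeping obstacle I anticipate is keeping the various thickenings $S^N$, $S_\sigma^N$, $S^{2N}$, $S_\sigma^{2N}$ straight and invoking Lem.~\ref{lem:composedthickenings} carefully so that the naturality square of $\Psi$ has the correct source and target; no new metric or combinatorial estimate is needed.
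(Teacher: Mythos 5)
Your proof is correct and follows essentially the same route as the paper's: both arguments pick a basic-open representative $x_\sigma \in F(S_\sigma)$ of $x \in F(S)$ via the cosheaf/colimit structure, invoke the commuting basis triangle, and transport the identity to $F(S^{2N})$ using functoriality of $F$ together with the naturality squares of $\Phi$ and $\Psi$. The only difference is presentational — you carry out the element chase explicitly where the paper assembles the same facts into a single commuting diagram.
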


Taken together, we immediately have the following proposition. 
\begin{proposition}
\label{prop:zeros}
Fix a basis $N$-assignment $(\Phi,\Psi)$. 
If $L_B(\Phi,\Psi) = 0$, then $\Phi$ and $\Psi$ can be extended to natural transformations with $L(\Phi,\Psi) = 0$, and thus constitute an interleaving. 
\end{proposition}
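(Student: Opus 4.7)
The plan is to chain together the two preceding extension lemmas (in both orientations) and then invoke the last sentence of \Cref{lem:lossimpliescommutes}. Unpacking the hypothesis $L_B(\Phi,\Psi)=0$ gives four families of zeros: $\Lpl^{S_\tau,S_\sigma}(\Phi)=0$, $\Lpr^{S_\tau,S_\sigma}(\Psi)=0$, $\Ltd^{S_\sigma}(\Phi,\Psi)=0$, and $\Ltu^{S_\sigma}(\Phi,\Psi)=0$ for every $\sigma\le\tau$ in $K$.

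First I would feed the parallelogram information into \Cref{lem:extendToNatTrans} to extend $\Phi$ from a basis assignment to a collection indexed by every open set in $\Open(\cU)$ with $\Lpl^{S,T}(\Phi)=0$ for all $S\subseteq T$; that is exactly the statement that the extension is a bona fide natural transformation $F\Rightarrow G^N$. A symmetric version of the same lemma (swapping the roles of $F$ and $G$) extends $\Psi$ to a natural transformation $G\Rightarrow F^N$ with $\Lpr^{S,T}(\Psi)=0$ for all $S \subseteq T$. Then, applying \Cref{lem:extendTriangles} (and its symmetric companion obtained by interchanging the roles of $\Phi$ and $\Psi$) to these natural transformations promotes $\Ltd^{S_\sigma}=0$ and $\Ltu^{S_\sigma}=0$ from basis opens to arbitrary opens. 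Taken together the four families of zeros give $L(\Phi,\Psi)=0$, and the concluding clause of \Cref{lem:lossimpliescommutes}, applied with $k=0$, then upgrades the pair to an $N$-interleaving, which is exactly the required conclusion.

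The only delicate point I anticipate is the compatibility of the extensions with the original basis data: \Cref{lem:extendTriangles} demands $\Ltd^{S_\sigma}=0$ for the post-extension natural transformations, while the hypothesis only supplies the analogous equality for the original basis assignment. Thus I must verify that the construction of \Cref{lem:extendToNatTrans} reproduces the given $\Phi_{S_\sigma}$ on basis opens, and also takes the ``expected'' values on the non-basis sets $S_\sigma^N$ that appear as codomains inside the triangle diagrams, so that the triangle losses are genuinely preserved. Both should fall out of the explicit colimit-based extension used inside \Cref{lem:extendToNatTrans}, leaving the proposition as essentially a one-step combination of the two preceding lemmas and the ``zero-loss implies interleaving'' direction of \Cref{lem:lossimpliescommutes}.
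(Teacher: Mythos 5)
Your proposal is correct and matches the paper's argument, which simply observes that the proposition follows immediately by combining Lem.~\ref{lem:extendToNatTrans} and Lem.~\ref{lem:extendTriangles} (and their symmetric counterparts) and then invoking the zero-loss clause. If anything you are more careful than the paper: the compatibility issue you flag (that the colimit extension must reproduce the given basis components, and that the triangle losses in $L_B$ already implicitly reference components at the non-basic opens $S_\sigma^N$) is real, is not discussed in the paper, and is resolved exactly as you suggest by the uniqueness of the map out of the colimit.
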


Finally, we arrive at our main theorem, where we can use the provided basis $n$-assignment and the calculated loss function to give a bound for the interleaving distance.

\begin{theorem}
\label{thm:secondBound}
Given a basis $n$-assignment  
\begin{equation*}
\phi = \{\phi_{S_\sigma} \mid \sigma \in K\} 
\text{ and } 
\psi = \{\psi_{S_\sigma} \mid \sigma \in K\}, 
\end{equation*}
we have 
\begin{equation*}
    d_I(F,G) \leq n + L_B(\phi,\psi).
\end{equation*}
\end{theorem}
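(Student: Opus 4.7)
The plan is to mimic the proof of Thm.~\ref{thm:bound}, replacing full-open-set unnatural transformations with their basis counterparts and appealing to the extension results Lems.~\ref{lem:extendToNatTrans} and \ref{lem:extendTriangles} together with Prop.~\ref{prop:zeros}. Set $k = L_B(\phi,\psi)$. On each basis open $S_\sigma$, define
\begin{equation*}
\Phi_{S_\sigma} = G[S_\sigma^n \subseteq S_\sigma^{n+k}] \circ \phi_{S_\sigma}, \qquad
\Psi_{S_\sigma} = F[S_\sigma^n \subseteq S_\sigma^{n+k}] \circ \psi_{S_\sigma},
\end{equation*}
which together form a basis $(n+k)$-assignment. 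The overall aim is to verify $L_B(\Phi,\Psi) = 0$, after which Prop.~\ref{prop:zeros} immediately yields an $(n+k)$-interleaving and hence the stated bound.

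The parallelogram portions of $L_B(\Phi,\Psi)$ are handled directly by parts 1 and 2 of Lem.~\ref{lem:lossimpliescommutes}. At each basis pair $S_\tau \subseteq S_\sigma$ with $\sigma \leq \tau$, the bounds $\Lpl^{S_\tau,S_\sigma}(\phi) \leq k$ and $\Lpr^{S_\tau,S_\sigma}(\psi) \leq k$ built into $L_B$ exactly match the lemma's hypotheses, so $\Lpl^{S_\tau,S_\sigma}(\Phi) = 0$ and $\Lpr^{S_\tau,S_\sigma}(\Psi) = 0$. For the triangle portions at each basis open $S_\sigma$, I would first invoke Lem.~\ref{lem:extendToNatTrans} to propagate the now-vanishing basis parallelograms of $\Phi$ and $\Psi$ into full natural transformations on $\Open(\cU)$. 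Once $\Phi$ and $\Psi$ are genuine natural transformations, the parallelograms on non-basis pairs such as $(S_\sigma^n, S_\sigma^{n+k})$ commute automatically, and the argument of Lem.~\ref{lem:lossimpliescommutes} parts 3 and 4 can be run at each $S_\sigma$ using $\Ltd^{S_\sigma}(\phi,\psi) \leq k$ and $\Ltu^{S_\sigma}(\phi,\psi) \leq k$ to conclude $\Ltd^{S_\sigma}(\Phi,\Psi) = 0$ and $\Ltu^{S_\sigma}(\Phi,\Psi) = 0$.

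With $L_B(\Phi,\Psi) = 0$ in hand, Prop.~\ref{prop:zeros} produces the full $(n+k)$-interleaving, yielding $d_I(F,G) \leq n+k = n + L_B(\phi,\psi)$. The main obstacle I anticipate is precisely the step in which the triangle bound $\Ltd^{S_\sigma}(\phi,\psi) \leq k$ is converted into the triangle identity $\Ltd^{S_\sigma}(\Phi,\Psi) = 0$: the bookkeeping in Lem.~\ref{lem:lossimpliescommutes} part 3 asks for a parallelogram bound on the non-basis pair $(S_\sigma^n, S_\sigma^{n+k})$, which does not appear directly in $L_B$. The replacement for this hypothesis must come from the natural-transformation property of the extended $\Psi$ provided by Lem.~\ref{lem:extendToNatTrans}, which supplies precisely the commutativity needed to chain together the basis-level inequalities and the inclusion maps into the required triangle identity; Lem.~\ref{lem:extendTriangles} then lifts this from basis opens to arbitrary opens.
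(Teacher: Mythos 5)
Your proposal is correct and follows essentially the same route as the paper's proof: compose the basis maps with inclusions to get a basis $(n+k)$-assignment, kill the basis parallelograms via parts 1--2 of Lem.~\ref{lem:lossimpliescommutes}, extend to genuine natural transformations via Lem.~\ref{lem:extendToNatTrans}, and then supply the missing hypothesis $\Lpr^{S_\sigma^n,S_\sigma^{n+k}} \leq k$ of part 3 from the naturality of the extended $\Psi$ before propagating the triangle identities to all opens (Lem.~\ref{lem:extendTriangles}, equivalently Prop.~\ref{prop:zeros}). The "main obstacle" you flag is precisely the step the paper handles with its explicit non-commutative diagram for $\Lpr^{S_\sigma^n,S_\sigma^{n+k}}$, and your proposed resolution matches it.
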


\begin{proof}
This proof proceeds in the same way as that of Thm.~\ref{thm:bound} with some minor modifications of input assumptions. 
First, let $k = L_B(\phi,\psi)$; and 
define a basis $(n+k)$-assignment by
\begin{equation*}
\{\Phi_{S_\sigma} = G[\subseteq] \circ \phi_{S_\sigma} \mid \sigma \in K\}
\qquad \text{ and } \qquad 
\{\Psi_{S_\sigma} = F[\subseteq] \circ \psi_{S_\sigma} \mid \sigma \in K\}. 
\end{equation*}
By Lem.~\ref{lem:lossimpliescommutes}, we know that  $\Lpl^{S_\tau,S_\sigma}(\Phi)  =0$
and
$\Lpr^{S_\tau,S_\sigma}(\Psi)  =0$
for all $\tau \leq \sigma$.
Then by Lem.~\ref{lem:extendToNatTrans}, we can extend $\Phi$ and $\Psi$ to full natural transformations defined for all $S \in \Open(\cU)$. 

To show that $\Phi$ and $\Psi$ constitute an $(n+k)$-interleaving, we must check triangles; i.e.~ensure that $\Ltd^{S}(\Phi, \Psi) = \Ltu^{S}(\Phi, \Psi)= 0$. 
With the goal of using part 3 of Lem.~\ref{lem:lossimpliescommutes}, first note that $\Ltd^{S_\sigma} (\phi,\psi) \leq k$ for basis elements. 
We can see that $\Lpr^{S_\sigma^n, S_\sigma^{n+k}} \leq k$ by using the (non-commutative) diagram 
\begin{equation*}
\begin{tikzcd}
    & F(S_\sigma^{2n}) 
        \ar[rr , "{F[\subseteq]}"] 
    && F(S_\sigma^{2n+k}) 
        \ar[r, "{F[\subseteq]}"]  
    & F\left(S_\sigma^{2(n+k)}\right)\\
    G(S_\sigma^n) 
        \ar[rr, "{G[\subseteq]}"'] 
        \ar[ur, "\psi_{\bullet}"] 
        \ar[urrr, "\Psi_{\bullet}", orange]
    && G(S_\sigma^{n+k}). 
        \ar[ur, "\psi_{\bullet}", very near start] 
        \ar[urr, "\Psi_{\bullet}"', orange]
\end{tikzcd}
\end{equation*}
The leftmost and rightmost triangles commute by definition of $\Psi$, and the orange parallelogram commutes because $\Psi$ is a natural transformation. 
Then chasing any $x \in G(S_\sigma^n)$ up to the top right $F\left(S_\sigma^{2(n+k)}\right)$ results in the same element, giving the required bound on $\Lpr^{S_\sigma^n, S_\sigma^{n+k}} $. 
Using Lem.~\ref{lem:extendTriangles} for $\Phi$ and $\Psi$, $\Ltd^{S} (\Phi,\Psi) =0$ for all open sets $S$. 
The proof that $\Ltu^{S} (\Phi,\Psi) =0$ is similar.
Therefore $\Phi$ and $\Psi$ are an $(n+k)$-interleaving, giving the bound.
\end{proof}

What is surprising about this bound is that despite checking fewer open sets, the loss function for $L_B$ is actually lower than that found using $L$. 
One reason for this is that when we work with the smaller set of input maps, we extend the collection to a ``better'' full assignment, potentially getting rid of some of the causes of a nonzero loss function in the first place. 
For example, a full assignment would be required to provide a map $\phi_S$ for a $S$ with multiple connected components, say $S = T_1 \cup T_2$.
Since no requirements were made of this map based on the $\phi_{T_1}$ and $\phi_{T_2}$ maps, there is a reasonable chance that the loss function contribution from the $\Lpl^{T_1,S}$ is higher than necessary. 
However, in the basis version, we can build the best possible $\phi_T$ given the information over $\phi_{S_1}$ and $\phi_{S_2}$, providing a potentially better, but certainly no worse, bound.

\section{Computation}
\label{sec:computation}

In this section, we show that given an $n$-assignment $\phi$  and $\psi$, we can compute the loss function $L_B(\phi,\psi)$ in polynomial time. 
For simplicity, we describe the algorithm explicitly in the case where $d=1$ for clarity of exposition, before addressing the run time in higher dimensions. 

\subsection{Data Structures for $d=1$}
\label{ssec:DataStructures}
In this section, we build a pair of graphs representing a pair of input functors $F,G$, and use pointers between the vertex and edge sets to represent a given $n$-assignment $\phi$ and $\psi$.
For ease of exposition, we will carefully focus on the case $d=1$, following the example of Fig.~\ref{fig:DataStructureExample} to illustrate our construction. 
While we acknowledge an overuse of the term ``graph'' throughout this paper as it is used in many different contexts, for this section we use the term graph to mean a network; i.e.~a combinatorial pair consisting of vertices and edges. 
Throughout, we denote edges interchangeably by either $(x,y)$ or $xy$ depending on the context and notational complexity. 

At a high level, we will construct graphs for $F$ and $G$, which we denote by $(V_F,E_F)$ and $(V_G, E_G)$.
Then we will build data structures to encode the natural transformations $\phi$ and $\psi$.
For clarity, we use $\phitt$ and $\psitt$ to denote a collection of pointers that will store the information of $\phi$ and $\psi$, respectively.
These will be viewed as set maps\footnote{This notation is meant to imply that $\phitt$ is composed of two maps,  $\phitt:V_F \to V_G$ and $\phitt:E_F \to E_G$.} $\phitt:(V_F,E_F) \to (V_G,E_G)$ and $\psitt:(V_G,E_G) \to (V_F,E_F)$, which map each vertex to a vertex in the other graph and each edge to an edge in the other graph. 
We give explicit constructions of $\phitt$ and $\psitt$ as well as further restrictions on allowable maps in what follows.

Focusing on $d=1$,  the complex $K$, i.e.,~the discretization of $\R$, consists of vertices which we write as $\sigma_{-L},\cdots,\sigma_L$ with heights in our bounding box $[-L\delta,L\delta]$, and with edges $\tau_j = (\sigma_j, \sigma_{j+1})$. 
Then we construct the graph for $F:\Open(\cU) \to \Set$ by generating a vertex for every object in every $F(S_{\sigma_i})$ and connect them using the morphisms of the functor.
This results in a vertex set 
$$V_F = \coprod_{i =1}^B F(S_{\sigma_i}),$$
and an edge for every object in every $F(S_{\tau_i})$, giving edge set 
$$E_F = \coprod_{i =1}^{B-1} F(S_{\tau_i}).$$ 
The endpoints of any edge $e \in F(S_{\tau_i}) \subseteq E_F$ can be found via the attaching maps: 
\begin{align*}
F[S_{\tau_i} &\subseteq S_{\sigma_{i}}](e) \in F(S_{\sigma_i}) 
    \text{ and} \\
F[S_{\tau_i} &\subseteq S_{\sigma_{i+1}}](e) \in F(S_{\sigma_{i+1}}). 
\end{align*}
For example,  $e = (v_4,v_6) \in F(S_{\tau_4})$ in Fig.~\ref{fig:DataStructureExample} has endpoints $v_6 \in F(S_{\sigma_4})$ and $v_4 \in F(S_{\sigma_5})$. 
We store this data in a standard adjacency list.
In addition, each vertex also keeps track of its height, so a vertex $v \in F(S_{\sigma_i})$ also stores the value $i$ as a representation of its height.

\begin{figure}
    \centering
\begin{minipage}{0.65\textwidth}
    \includegraphics[scale=0.5]{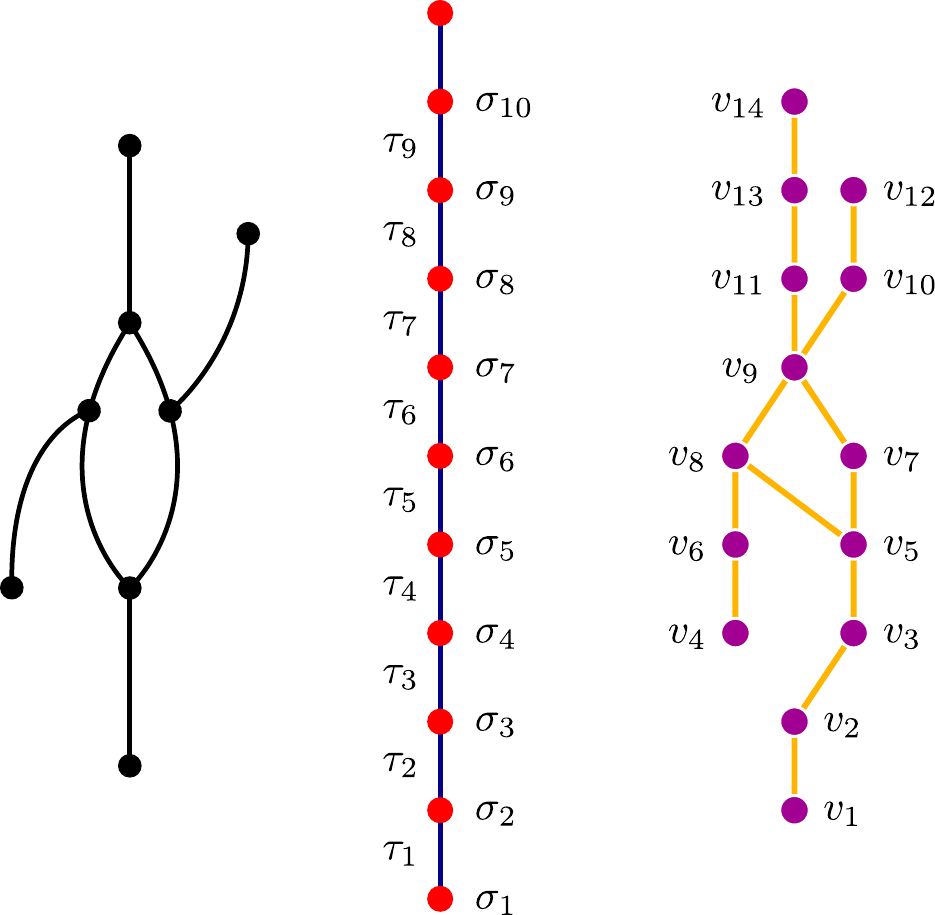}
\end{minipage}
\begin{minipage}{0.3\textwidth}
\begin{align*}
    v_{14}&: \{v_{13}\}, \hspace{3mm} \sigma_{10} \\
    v_{13}&: \{v_{11}, v_{14}\}, \hspace{3mm} \sigma_9 \\
    v_{12}&: \{v_{10}\}, \hspace{3mm} \sigma_9 \\
    v_{11}&: \{v_9, v_{13}\}, \hspace{3mm} \sigma_8 \\
    v_{10}&: \{v_9, v_{12}\}, \hspace{3mm} \sigma_8 \\
    v_9&: \{v_7, v_8, v_{10}, v_{11}\}, \hspace{3mm} \sigma_7 \\
    v_8&: \{v_5, v_6, v_9\}, \hspace{3mm} \sigma_6 \\
    v_7&: \{v_5, v_9\}, \hspace{3mm} \sigma_6 \\
    v_6&: \{v_4, v_8\}, \hspace{3mm} \sigma_5 \\
    v_5&: \{v_3, v_7, v_8\}, \hspace{3mm} \sigma_5 \\
    v_4&: \{v_6\}, \hspace{3mm} \sigma_4 \\
    v_3&: \{v_5, v_2\}, \hspace{3mm} \sigma_4 \\
    v_2&: \{v_3, v_1\}, \hspace{3mm} \sigma_3 \\
    v_1&: \{v_2\}, \hspace{3mm} \sigma_2 \\
\end{align*}   
\end{minipage}
    \caption{From left to right: an example input Reeb graph ($d=1$), a discretization of $\R$, the generated mapper graph, and the data structure encoding the mapper graph.}
    \label{fig:DataStructureExample}
\end{figure}

Next, we encode the information for an assignment $(\phi,\psi)$ between $F,G: \Open(\cU) \to \Set$ by constructing the set maps $\phitt$ and $\psitt$ using the graphs $(V_F,E_F)$ and $(V_G,E_G)$.
Assume we are given $n$-assignments $\phi$ and $\psi$. 
We start by focusing on the vertex set. 
For this, we need to represent the map
$\phi_{S_{\sigma_i}}:F(S_{\sigma_i}) \to G(S_{\sigma_i}^n)$. 
The elements of $F(S_{\sigma_i})$ are already given as vertices, however the elements of $G(S_{\sigma_i}^n)$ are not.
But, because of the cosheaf structure of $G$, the elements of $G(S_{\sigma_i}^n)$ can be seen as the connected components of particular subsets of the $(V_G,E_G)$. 

We emphasize that we are using the term ``subset" because the resulting objects will not be subgraphs in the usual sense. 
These subsets will consist of a subset of the vertex set and a subset of the edge set $(V', E')$, but without the promise that both endpoints of an edge are included in the set. 
However, we can still define connected components in this setting to consist of vertices and edges which can be connected by paths. 

For some vertex $\sigma_i$ of $K$, let 
\[
V_{G,\sigma_i,n} = \{ v \mid v \in G(S_{\sigma_j}), j \in [i-n,i+n]\}
\]
and 
\[
E_{G,\sigma_i,n} = \{e \mid e \in G(S_{\tau_j}), j \in [i-n-1,i+n] \}
\]
and write $(V_G,E_G)_{\sigma_i,n}:=(V_{G,\sigma_i,n}, E_{G,\sigma_i,n})$.
Similarly for the edges of $K$, we can define 
\[
V_{G,\tau_i,n} = \{ v \mid v \in G(S_{\sigma_j}), j \in [i-n+1,i+n]
\]
and 
\[
E_{G,\tau_i,n} = \{e \mid e \in G(S_{\tau_j}), j \in [i-n,i+n] \}
\]
with notation $(V_G,E_G)_{\tau_i,n}:=(V_{G,\tau_i,n}, E_{G,\tau_i,n})$. 
Note that because of the endpoints, these are not induced subgraphs; see Fig.~\ref{fig:Assignment} for examples.
We call either $(V_G,E_G)_{\tau_i,n}$ or $(V_G,E_G)_{\sigma_i,n}$ a \emph{slice} of the graph since they are each a portion of the graph which gives the connected components over an open interval as seen in the following lemma. 

\begin{lemma}
The elements of $G(S_{\sigma_i}^n)$ (resp.~$G(S_{\tau_i}^n)$) are in one-to-one correspondence with the connected components of the subset of the graph $(V_G,E_G)_{\sigma_i,n}$ (resp.~$(V_G,E_G)_{\tau_i,n}$). 
\end{lemma}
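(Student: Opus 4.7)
The plan is to invoke the cosheaf gluing axiom to express $G(S_{\sigma_i}^n)$ as a colimit over basic opens, and then observe that this $\Set$-colimit is computed by the same equivalence relation that defines the connected components of the subset graph $(V_G,E_G)_{\sigma_i,n}$.

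First I would explicitly describe which basic opens sit inside $S_{\sigma_i}^n$. Unwinding the $n$-thickening in the $d=1$ case (or proving it by a short induction using $S_{\sigma_i}^1 = (S_{\sigma_i}^\uparrow)^\downarrow$ together with Lem.~\ref{lem:composedthickenings}), one finds that $S_{\sigma_i}^n$ is the downset of $\{U_{\sigma_j} : j \in [i-n, i+n]\} \cup \{U_{\tau_j} : j \in [i-n-1, i+n]\}$. The basic opens it contains are therefore $\{S_{\sigma_j}\}_{j \in [i-n, i+n]}$ and $\{S_{\tau_j}\}_{j \in [i-n-1, i+n]}$, whose index ranges match the definitions of $V_{G,\sigma_i,n}$ and $E_{G,\sigma_i,n}$ exactly. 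This bookkeeping step is the key alignment between the functorial and combinatorial pictures. Applying the cosheaf property then gives
\[
G(S_{\sigma_i}^n) \;\cong\; \varinjlim_{S_\rho \subseteq S_{\sigma_i}^n} G(S_\rho),
\]
whose only nontrivial inclusions between basic opens are $S_{\tau_j} \subseteq S_{\sigma_j}$ and $S_{\tau_j} \subseteq S_{\sigma_{j+1}}$ (restricted to those $j$ with both endpoints in range). Computing this colimit in $\Set$ yields the quotient of $\bigsqcup_j G(S_{\sigma_j}) \sqcup \bigsqcup_j G(S_{\tau_j})$ by the equivalence relation generated by identifying each $e \in G(S_{\tau_j})$ with each of its two endpoint images under the attaching maps, whenever the target is in range. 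By the paper's definition of paths through $(V',E')$, this equivalence relation is precisely the one whose classes are the connected components of $(V_G,E_G)_{\sigma_i,n}$, giving the desired bijection.

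The one subtlety I expect will need careful treatment is the ``dangling'' edges: elements of $G(S_{\tau_{i-n-1}})$ and $G(S_{\tau_{i+n}})$ have only one endpoint in $V_{G,\sigma_i,n}$ because the basic open at the missing vertex is not contained in $S_{\sigma_i}^n$. In the colimit these elements get identified only with their included endpoint, so they contribute no new classes beyond the components they already lie in; this matches the paper's convention that a dangling edge is part of the component of its included endpoint. Once this case is handled, the claim for $G(S_{\tau_i}^n)$ follows by rerunning the thickening computation, which shows $S_{\tau_i}^n$ contains $S_{\sigma_j}$ for $j \in [i-n+1, i+n]$ and $S_{\tau_j}$ for $j \in [i-n, i+n]$---again matching $V_{G,\tau_i,n}$ and $E_{G,\tau_i,n}$---so the colimit-equals-components argument carries over verbatim.
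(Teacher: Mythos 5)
Your proof is correct and follows essentially the same approach as the paper: invoke the cosheaf property to express $G(S_{\sigma_i}^n)$ as a colimit over the basic opens contained in $S_{\sigma_i}^n$, and track indices to match $V_{G,\sigma_i,n}$ and $E_{G,\sigma_i,n}$. You supply more of the index bookkeeping and the dangling-edge observation than the paper (which states the lemma is ``immediate'' from the colimit plus careful index tracking), but the argument is the same.
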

\begin{proof} 
This lemma is immediate from noting that $G(S_{\sigma_i}^n)$ is the colimit of the diagram given by $G(S_\sigma)$ for $S_\sigma \subset S_{\sigma_i}^n$ where $\sigma$ is taken over all cells  in $K$ (of both dimension 0 and 1), and then carefully tracking indices of these cells from the above notation.
The edge version is similar. 
\end{proof}

With this, we can return to storing a given $n$-assignment $\phi$ and $\psi$. 
To store the unnatural transformation $\phi$, for each $v \in F(S_{\sigma_i})$, we choose a vertex $\phitt(v) \in V_{G,\sigma_i,n}$, where $\phitt(v)$ is in the connected component of $(V_{G}, E_G)_{\sigma_i,n}$ represented by $\phi_{S_{\sigma_i}}(v) \in G(S_{\sigma_i}^n)$.
We note that given a collection of choices of vertices 
$$\{ \phitt(v) \in G(S_{\sigma_j})\subset V_G \mid v \in F(S_{\sigma_i}) \subset V_F,\, |i-j| \leq n\}$$
and edges 
$$\{ \phitt(e) \in G(S_{\tau_j}) \mid e \in F(S_{\tau_i}) \subset V_F,\, |i-j| \leq n\}$$
we can immediately reconstruct an unnatural transformation $\phi$  by setting $\phi_{S_{\sigma_i}}(v)$ to be the element of $G(S_{\sigma_i}^n)$ representing the connected component of $(V_G,E_G)_{\sigma_i,n}$ containing  $\phitt(v)$. 
As these processes are inverse of each other and the parallel setup can be done for $\psi$ and $\psitt$, we are justified in using this data structure to represent the assignment.

For an example, consider Fig.~\ref{fig:Assignment} where we assume $n=1$. 
If  $\phitt(b) = w$, then $\phi_{S_{\sigma_i}}(b)$ is the connected component that includes $w$ of $(V_G,E_G)_{\sigma_i,1}$ as shown on the right.  
We can similarly find the edge map $\phitt(e)$ for $e \in F(S_{\tau_i})$ by setting it to be an edge in $E_{G,\tau_i,n}$ representing the connected component of $\phi_{S_{\tau_i}}(e) \in G(S_{\tau_i}^n)$ in $(V_G,E_G)_{\tau_i,n}$.
So, for example, in Fig.~\ref{fig:Assignment} where $n=1$, the input data might have $\phitt(ab) = (xy) \in E_G$ and $\phitt(bc) = (uv) \in E_G$.

\begin{figure}
    \centering
    \includegraphics[width = .45\textwidth, align = c]{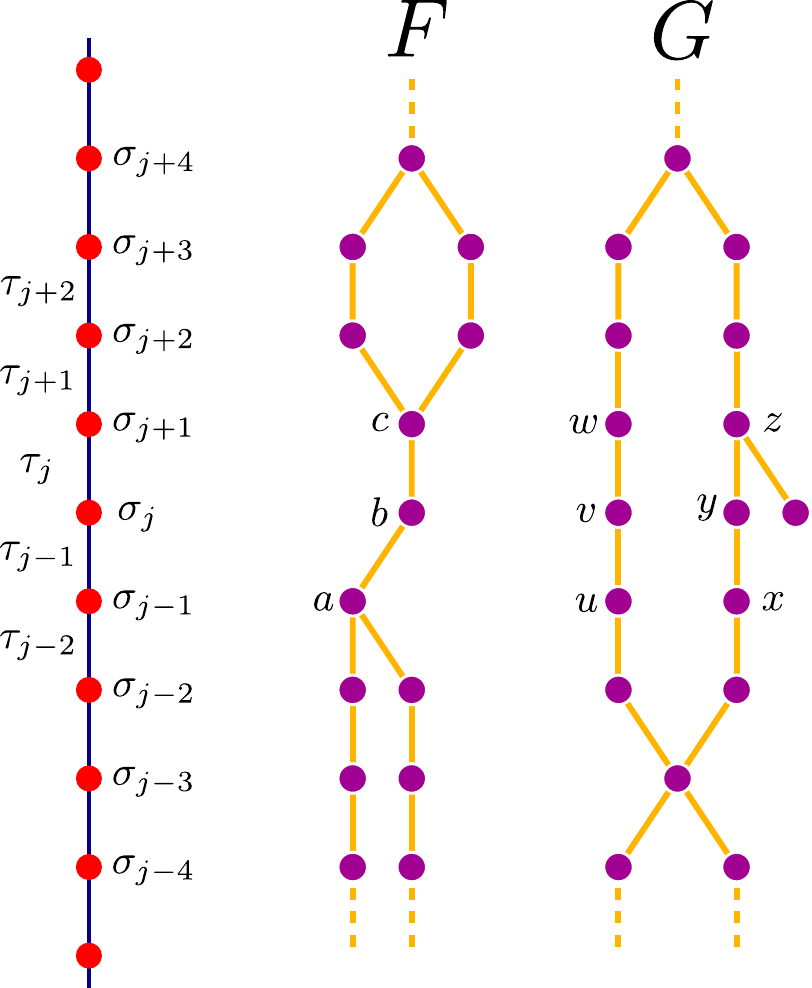} 
    \qquad 
    \begin{minipage}{.25\textwidth}
    \begin{equation*}
    \phitt:  
    \begin{cases}
        b \mapsto w \\ 
        c \mapsto z\\
        ab \mapsto xy\\
        bc \mapsto uv\\
        \phantom{xxl} \vdots 
    \end{cases}
    \end{equation*}
    \begin{equation*}
    \psitt:  
    \begin{cases}
        x \mapsto b \\ 
        w \mapsto c\\
        \phantom{xxl} \vdots
    \end{cases}
    \end{equation*}
    \end{minipage}
    \qquad 
    \includegraphics[width = 0.15\textwidth,align = c]{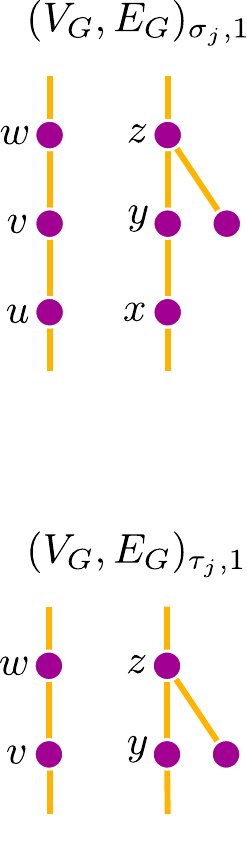}
    \caption{At left are two example input mapper graphs. In the middle are a subset of an example {\phitt} and {\psitt}  input assignment used throughout the text. At right are two example slices of the graphs used when checking connectivity.}
    \label{fig:Assignment}
\end{figure}

\subsection{Algorithm and Complexity for $d=1$}
\label{ssec:Complexity}
In this section, we discuss the complexity of determining $L_B(\phi,\psi)$ given $\phitt$ and $\psitt$. 
First, we will proceed using a binary search on $k \in [0,\cdots, 2L]$ where the maximum is determined by the diameter of the bounding box. 
For a fixed $k$, we will determine if $L_B(\phi,\psi) \leq k$ by checking if $\Lpl^{S_\tau, S_\sigma}$,  $\Ltd^{S_\sigma}$, 
$\Lpr^{S_\tau, S_\sigma}$ and $\Ltu^{S_\sigma}$ are all less than $k$ for all $\sigma$ and $\tau$ in $K$. 
We will describe the cases for 
$\Lpl^{S_\tau, S_\sigma}$ and $\Ltd^{S_\sigma}$, as 
$\Lpr^{S_\tau, S_\sigma}$ and $\Ltu^{S_\sigma}$ are symmetric.

Start with $\Lpl^{S_\tau, S_\sigma}$ and note that in the case where $d=1$, there are two pairs necessary to check for each edge: $\tau_j, \sigma_j$ and $\tau_j,\sigma_{j+1}$.
Fix $\sigma_\ell$ to be either $\sigma_j$ or $\sigma_{j+1}$.
For each edge $e \in F(S_{\tau_j})$, we need to check if the two possible images in $G(S_{\sigma_\ell}^{n+k})$ under the diagram
\begin{equation}
\label{eqn:dgm:parallel_extend_basis}
\begin{tikzcd}
        F(S_{\tau_i})  
            \ar[r, "{F[\subseteq ]}"] 
            \ar[dr, "\phi_{S_{\tau_i}}"',  violet]
        & F(S_{\sigma_\ell})
            \ar[dr, "\phi_{S_{\tau_i}^n}",  violet]
        & & e \ar[r,mapsto] \ar[dr, mapsto]
        & v \ar[dr, mapsto] \\
        & G(S_{\tau_i}^n) 
            \ar[r, "{G[\subseteq ]}"'] 
        & G (S_{\sigma_\ell}^n) \ar[r] 
        & G(S_{\sigma_\ell}^{n+k})
        & \substack{\\{[e']}} \ar[r, mapsto, shift right] 
        & \substack{[w]\\{[e']}} \ar[r, mapsto, shift left] \ar[r, mapsto, shift right] 
        & \substack{[w]\\{[e']}}
\end{tikzcd}
\end{equation}
are the same. 
Note that we use $[-]$ to  represent the connected component in the relevant slice of the graph containing that edge or vertex. 
Following the top of diagram  Eq.~\eqref{eqn:dgm:parallel_extend_basis}, we know that $e$ has a unique endpoint vertex  $v \in F(S_{\sigma_\ell})$, and that vertex has an image under $\phi_{S_{\tau_i}^n}$,  which is a connected component represented by  $\phitt(v) = w \in V_G$. 
Following down, the edge $e$ has an edge image $\phitt(e) = e' \in E_G$. 
So the question becomes: are $e'$ and $w$ in the same connected component of the slice $(V_G,E_G)_{\sigma_\ell,n+k}$, whose components represent the elements of $G(S_{\sigma_\ell}^{n+k})$? 
In order to answer this question easily for all starting edges in $F(S_{\tau_i})$, we use a breadth first search  algorithm to label all the connected components of the slice once. Then the two images of each edge starting from $F(S_{\tau_i})$ can be checked in constant time~\cite[Sec.~5.6]{Jeffe-book}. 
This results in a total time (when $d=1$) of $O(|V_G|+|E_G|)$ time taken for checking the parallelogram.
There are $2L$ dimension 1 cells in $K$, and after computing connected components for the two parallelogram diagrams, all edges in a cell can have each parallelogram diagrams checked in $O(1)$ time, thus the time to determine if 
$\max_{\sigma \leq \tau} \Lpl^{S_\tau, S_\sigma} \leq k$ 
 is $O(L \cdot (|V_G|+|E_G|))$.

In the example of Fig.~\ref{fig:Assignment}, assume $n=k=1$ and assume the given input $\phitt$ is as noted. 
Then  for the diagram of Eq.~\eqref{eqn:dgm:parallel_extend_basis} with $\ell = j$ and chasing $bc \in F(S_{\tau_j})$,
this comes down to checking if the connected component of $\phitt(b) = w$ and $\phitt(bc) = xy$ are the same in the portion of $(V_G,E_G)_{\sigma_j,2}$.
In this particular example, there are two connected components in this slice and the images are not in the same component. 
Then we know that $\Lpl^{S_{\tau_j}, S_{\sigma_j}}>k$ so we would skip all other commutative diagram checks and immediately move on in our binary search. 
If it were the case that the two images were in the same connected component, then $\Lpl^{S_{\tau_j}, S_{\sigma_j}}\leq k$ and thus we would move on to the next commutative diagram check.

Checking if $\Ltd^{S_\tau} \leq k$ is similar so we briefly highlight the differences.  
First, there are two types of basis elements in our case where $d=1$, so we need to check  $\Ltd^{S_{\sigma_i}} \leq k$ (meaning checking vertices) and $\Ltd^{S_{\tau_i}} \leq k$ (meaning checking edges). 
We focus on the case of vertices since the edge version is similar. 
For any vertex $v \in F(S_{\sigma_i})$, we need to chase it around the diagram
\begin{equation}
\label{eqn:dgm:tri_extend_basis}
    \begin{tikzcd}
        F(S_{\sigma_i}) 
            \ar[rr, "{F[S_{\sigma_i} \subseteq S_{\sigma_i}^{2n}]}"]   
            \ar[dr, "\phi_{S_{\sigma_i}}"',violet]
            & & F(S_{\sigma_i}^{2n})  \ar[r] 
            & F(S_{\sigma_i}^{2(n+k)}).
        \\
        & G(S_{\sigma_i}^n)
            \ar[ur, "\psi_{S_{\sigma_i}^n}"', orange]  
        & \substack{v\\ \phantom{x}} 
            \ar[rr, mapsto, shift left] \ar[dr, mapsto]
        & & 
        \substack{{[v]}\\{[v']}}
            \ar[r, mapsto, shift left]
            \ar[r, mapsto, shift right]
        & \substack{{[v]}\\{[v']}}\\
        & & & w \ar[ur, mapsto]
    \end{tikzcd}
\end{equation}
If $\texttt{phi}: v \mapsto w$, and $\texttt{psi}: w \mapsto v'$, the question again becomes:  are $v$ and $v'$ in the same connected component of $(V_G,E_G)_{\sigma_j,2(n+k)}$?
Similar to the parallelogram case, we take the relevant slice of the graph and check this connectivity question by finding connected components once in the slice in  $O(|V_G| +|E_G|)$ time, and then the check for each vertex in $F(S_{\sigma_i})$ is done in $O(1)$ time. 
As before, either the elements checked are in the same connected component of the relevant slice of the graph, in which case we move to the next diagram; or it does not, and we move to a different $k$ in our binary search.
There are $O(L)$ cells (counting both 0- and 1-dimensional cells) in $K$, meaning there are $O(L)$ triangle diagrams to check.
Thus, checking if $\max_{\sigma  \in K} \Ltd^{S_\sigma} \leq k $ can also be done in $O(L \cdot (|V_G| + |E_G| )$ time.

In our example case of Fig.~\ref{fig:Assignment} with $n=k=1$,  we have $2(n+k) = 4$.
Then chasing $b$, we need to check that $b$ and $\psitt \circ \phitt(b) = c$ are in the same connected component of $(V_G,E_G)_{\sigma_j,4}$. 
As this slice has one connected component, this triangle commutes. 
We can check another triangle $\Ltu^{S_{\sigma_j}} \leq k$ by chasing $w$. 
In this case, we must check if  $w$ and $\phitt \circ \psitt(w) = z$ are in the same component of $(V_G,E_G)_{\sigma_i,4}$, which again, they both are. 
In either case, if they were not, we would know the loss function is at least $k$ and continue in the binary search.

Putting this together, this means that if the graph representations of $F$ and $G$ are $(V_F,E_F)$ and $(V_G,E_G)$ respectively, the time for computing the loss function is 
\begin{equation*}
    O\Bigg( 
    L\log L  \cdot \max\Big\{|V_F|+|E_F|,|V_G|+|E_G| \Big\} \Bigg)
\end{equation*}
where the $\log L$ term comes from the binary search.

\subsection{Generalization for $d> 1$}

The generalization to higher dimensions makes relatively minor modifications for the algorithm, with the expected curse of dimensionality result in the running time. 
In this case, we build a graph with vertex set $V_F = \coprod_{\sigma \in K} F(S_\sigma)$ so that vertices represent all dimensional cubes rather than only 0 as earlier. 
An edge is given between every pair of vertices $v$ and $w$ for which $F[S_\tau \subseteq S_\sigma](v) = w$. 
Denote the sizes of these sets by $|V_F|$ and $|E_F|$. 
Note that these sizes are in some sense already hiding an exponential term in $d$ since the number of cells in the grid $K$ is $O(L^d)$.

As in $d=1$, we proceed using a binary search on $[0,\cdots,2L]$ where $[-L\delta,L\delta]^d $ is the bounding box of the images of $f: \X \to \R^d$ and $g:\Y \to \R^d$.
Again, for a fixed $k$, we will determine if $L_B(\phi,\psi) \leq k$ by checking if $\Lpl^{S_\tau, S_\sigma}$,  $\Ltd^{S_\sigma}$, 
$\Lpr^{S_\tau, S_\sigma}$ and $\Ltu^{S_\sigma}$ are all less than $k$ for all $\sigma$ and $\tau$ in $K$. 

If we count in terms of the open sets $S_\sigma$, every set  $F(S_\sigma)$ needs to be checked as the starting point for one triangle diagram $\triangled_{\Phi, \Psi}(S_\sigma)$, and as the starting point for one parallelogram diagram $\Parallelograml_{\Phi}(S_\sigma,S_\tau)$ for every $\tau \geq \sigma$. 
The grid structure means there are worst case $O(2^d)$ adjacent cells, so this results in $1+O(2^d)$ diagram checks to be done per vertex. 
Each of these checks involves determining if two vertices are in the same connected component of the higher dimensional analogue of a slice for $\sigma_{\overrightarrow{\ell}}$ with indices $\overrightarrow\ell \in \Z^d$ involves checking  
a portion of the graph with indices in a $d$-dimensional box 
$[\ell_1-(n+k), \ell_1+(n+k) ] \times \cdots \times [\ell_d-(n+k), \ell_d+(n+k) ]$
 and hence takes worst case $O(|V_F|+|E_F|)$  time.
 However as before, this connected component needs to only be found once per diagram.
 The result is a running time of 
 \begin{equation*}
O(\log L \cdot  2^d \cdot 
\max\{|V_F| + |E_F|, |V_G| + |E_G| \}).
 \end{equation*}

\section{Extension to Reeb Graphs}
\label{sec:ReebLoss}

We now take a brief diversion into understanding how the loss function framework can be used to approximate the Reeb graph interleaving distance. 
In this case, we consider a 1-dimensional mapper graph to be an approximation of the Reeb graph 
\cite{Carriere2017,Carriere2018,Munch2016,Brown2020,botnan2020}. 
We show that in order to bound the Reeb graph interleaving distance, we can compute the mapper graph for a resolution $\delta$, and then use the loss function to provide a similar bound.

\subsection{Definitions}

Given input data $f:\X \to \R$, the Reeb graph of $(\X,f)$ is computed as follows. 
Define an equivalence relation by setting $x \sim y$ iff $x$ and $y$ are in the same path-connected component of the levelset $f\inv(a)$.
With enough restrictions on the space and function (for example, a Morse function on a manifold), the resulting Reeb graph is a topological graph; i.e.~a 1-dimensional stratified space. 
Similar to the vantage taken for the mapper graphs in this paper, the data of a Reeb graph can be stored in a cosheaf. 
\begin{definition}
    For a given $(\X,f)$, the associated Reeb cosheaf is given by 
    \begin{equation*}
        \begin{matrix}
            \tF: & \Int & \to & \Set \\
            & I & \mapsto & \pi_0 f\inv(I)\\
            & \rotatebox[origin=c]{-90}{$\subseteq$} &  & \downarrow \pi_0[\subseteq]\\ 
            & J & \mapsto & \pi_0 f\inv(J)\\
        \end{matrix}
    \end{equation*}
    where morphisms are induced by the $\pi_0$ functor. 
\end{definition}

For clarity, we write the Reeb cosheaf with a tilde to distinguish it from the mapper cosheaf without a tilde. 
Given this input, we have the Reeb graph interleaving distance \cite{deSilva2016}, given as follows.
\begin{definition}
\label{def:ReebInterleavingDistance}
    Define the functor $(-)^\e: \Int \to \Int$ by $(a,b) \mapsto (a-\e,b+\e)$ with morphisms induced by inclusion. 
    Then $\tF_\e:\Int \to \Set$ is given by $\tF_\e(J) = \tF(J^\e)$.

    For given $\tF,\tG: \Int \to \Set$, an $\e$-interleaving is a pair of natural transformations $\tphi:\tF \Rightarrow \tG_\e$ and $\tpsi:\tG \Rightarrow \tF_\e$ such that 
\begin{equation*}
    \begin{tikzcd}
        \tF(I) 
            \ar[rr, "{\tF[I \subseteq I^{2n}]}"]   
            \ar[dr, "\tphi_I"',violet]
            & & \tF(I^{2n}) & 
        & \tF(I^n) \ar[dr]
            \ar[dr, "\tphi_{I^{n}}",violet]
        & \\
        & \tG(I^n)\ar[ur, "\tpsi_{I^n}"', orange]  & & 
        \tG(I) 
            \ar[rr, "{\tG[I \subseteq I^{2n}]}"']
            \ar[ur, "\tpsi_{I}", orange] 
        && \tG(I^{2n})
    \end{tikzcd}
\end{equation*}
    commute for all $I \in \Int$. 
    The (categorical) Reeb graph interleaving distance is given by 
    \begin{equation*}
        d_R(\tF,\tG) = \inf\{ \e \geq 0 \mid \text{ there exists an $\e$-interleaving}\}.
    \end{equation*}
    
\end{definition}

Fix a $\delta$.
Following Sec.~\ref{sec:background}, denote the vertices of $K$ by $\{\sigma_{-L},\cdots,\sigma_L\}$ where $\sigma_i$ is at the point $i\delta \in \R$. 
Denote the edges by $\tau_i = (i\delta,(i+1)\delta)$ which has faces $\sigma_i$ and $\sigma_j$. 
Given some input data $f:\X \to \R$, we can either construct its Reeb cosheaf $\tF:\Int \to \Set$, or by fixing some choice of $\delta$, we can construct its mapper cosheaf $F:\Open(K) \to \Set,\, F(S) = f\inv(|S|)$. 

We next show that the loss function we have computed here on the mapper version $d_I$ can be used to similarly bound the Reeb interleaving distance $d_R$.
We do this by showing that $d_I$ is an approximation of $d_R$, which can be viewed as a special case of \cite[Thm.~5.15]{botnan2020};  however, for clarity, we include a direct proof in Sec.~\ref{sec:technicalProofs} as our setting allows a proof with considerably less use of category theoretic machinery.

\begin{restatable}{proposition}{reebvsmapperbound}
\label{prop:reebvsmapperbound}
    For inputs $f:\X \to \R$ and $g:\Y \to \R$, denote the respective Reeb cosheaves as $\tF,\tG:\Int \to \Set$, and the respective mapper cosheaves as $F,G: \Open(K) \to \Set$. Then 
    \begin{equation*}
        d_R(\tF,\tG) \leq \left(d_I(F,G) + 1\right)\delta. 
    \end{equation*}
\end{restatable}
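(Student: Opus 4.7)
My plan is to convert an $n$-interleaving of the mapper cosheaves $(F,G)$ into an $(n+1)\delta$-interleaving of the Reeb cosheaves $(\tF,\tG)$. The key device is a map $I \mapsto S(I)$ from open intervals to open sets in $\Open(\cU)$ that approximates $I$ from above at resolution $\delta$, together with the observation that the two inclusions $I \hookrightarrow |S(I)|$ and $|S(I)^n| \hookrightarrow I^{(n+1)\delta}$ induce $\pi_0$-maps linking the Reeb and mapper values.

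\textbf{Step 1: Choosing $S(I)$.} For $I = (a,b)$, let $p = \lfloor a/\delta \rfloor$ and $q = \lceil b/\delta \rceil$, and set
\[
S(I) := \{U_{\sigma_\ell} \mid p+1 \leq \ell \leq q-1\} \cup \{U_{\tau_\ell} \mid p \leq \ell \leq q-1\}.
\]
A short direct calculation verifies that $S(I)$ is downward closed in $(\cU,\subseteq)$, that $|S(I)| = (p\delta, q\delta)$, and more generally $|S(I)^k| = ((p-k)\delta, (q+k)\delta)$. It follows that
\[
I \subseteq |S(I)| \subseteq I^\delta, \qquad |S(I)^n| \subseteq I^{(n+1)\delta}, \qquad \text{and} \qquad I \subseteq J \implies S(I) \subseteq S(J).
\]

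\textbf{Step 2: Building the Reeb maps.} Set $\e = (n+1)\delta$. The inclusions above induce $\pi_0$-maps $\tF(I) \to F(S(I))$ and $G(S(I)^n) \to \tG(I^\e)$, so we can define
\[
\tphi_I := \bigl[G(S(I)^n) \to \tG(I^\e)\bigr] \circ \phi_{S(I)} \circ \bigl[\tF(I) \to F(S(I))\bigr],
\]
and symmetrically $\tpsi_I$. Naturality of $\tphi$ (and $\tpsi$) at an inclusion $I \subseteq J$ follows by stacking three squares: the top and bottom commute because $\pi_0$ is functorial in set-inclusions, and the middle is precisely the naturality square for $\phi$ at $S(I) \subseteq S(J)$, available because $S(-)$ is monotone.

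\textbf{Step 3: Checking the interleaving triangles.} The substantive step is to verify $\tpsi_{I^\e} \circ \tphi_I = \tF[I \subseteq I^{2\e}]$. The composition routes through $G(S(I)^n) \to \tG(I^\e) \to G(S(I^\e))$, and comparing the explicit formulas from Step 1 gives the crucial containment $S(I)^n \subseteq S(I^\e)$. Naturality of $\psi$ together with $(S(I)^n)^n = S(I)^{2n}$ (Lem.~\ref{lem:composedthickenings}) yields
\[
\psi_{S(I^\e)} \circ G[S(I)^n \subseteq S(I^\e)] = F[S(I)^{2n} \subseteq S(I^\e)^n] \circ \psi_{S(I)^n},
\]
and the mapper-interleaving triangle $\psi_{S(I)^n} \circ \phi_{S(I)} = F[S(I) \subseteq S(I)^{2n}]$ then collapses the entire composition to the single inclusion-induced map $\tF(I) \to \tF(I^{2\e})$ associated with $I \subseteq |S(I^\e)^n| \subseteq I^{2\e}$, which is $\tF[I \subseteq I^{2\e}]$ by cosheaf functoriality. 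The other triangle is symmetric.

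\textbf{Main obstacle.} The delicate point is the interplay between the three distinct open sets $S(I)$, $S(I)^n$, and $S(I^\e)$ appearing on the $G$ side of the triangle. The choice of $S(I)$ must simultaneously be tight enough that $|S(I)^n| \subseteq I^{(n+1)\delta}$ (giving the $\e$-bound) and satisfy $S(I)^n \subseteq S(I^\e)$ (needed to invoke the naturality of $\psi$ and bridge into $\tpsi_{I^\e}$). Once the explicit formulas $|S(I)^k| = ((p-k)\delta, (q+k)\delta)$ are pinned down, both containments and the remainder of the diagram chase become routine.
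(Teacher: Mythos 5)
Your proposal is correct and follows essentially the same route as the paper's proof: both pass to the smallest grid-aligned interval containing $I$ (your $(p\delta,q\delta)$ is exactly the paper's $J$), define $\tphi_I$ as the composite $\tF(I)\to F(S(I))\xrightarrow{\phi_{S(I)}}G(S(I)^n)\to\tG(I^{(n+1)\delta})$, verify naturality by stacking inclusion squares around the naturality square of $\phi$, and collapse the interleaving triangle using naturality of $\psi$ at $S(I)^n\subseteq S(I^{\e})=S(I)^{n+1}$ together with the mapper triangle. The only cosmetic difference is that the paper treats $\tF(J)=F(S)$ as an equality of $\pi_0$-sets while you phrase the bridge as inclusion-induced $\pi_0$-maps; the content is identical.
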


Given this bound, we combine Prop.~\ref{prop:reebvsmapperbound} with Thm.~\ref{thm:secondBound} to show that the loss function for the mapper graph discretization bounds the Reeb graph interleaving as well and that, in particular, this bound is controlled by the diameter $\delta$ chosen for $K$. 

\begin{cor}
Given a basis $n$-assignment  
$\phi = \{\phi_{U_\sigma} \mid \sigma \in K\}$ 
and 
$\psi = \{\psi_{U_\sigma} \mid \sigma \in K\}$ for $F,G: \Open(K) \to \Set$, we have that 
\begin{equation*}
    d_R(\tF,\tG) \leq \delta(d_I(F,G) + 1)  \leq \delta(n + L_B(\phi,\psi)+1).
\end{equation*}
\end{cor}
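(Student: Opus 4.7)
The plan is to obtain the corollary as an immediate chaining of the two main results stated just before it, so no new technical machinery will be required. I would first invoke Prop.~\ref{prop:reebvsmapperbound}, which supplies the left-hand inequality
\begin{equation*}
    d_R(\tF,\tG) \leq \delta\bigl(d_I(F,G) + 1\bigr),
\end{equation*}
converting the mapper interleaving bound into a Reeb interleaving bound (with the extra factor of $\delta$ accounting for the geometric scale of the grid, and the $+1$ accounting for the discretization slack established in the proof of that proposition).

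Next I would apply Thm.~\ref{thm:secondBound} to the given basis $n$-assignment $(\phi,\psi)$, which yields
\begin{equation*}
    d_I(F,G) \leq n + L_B(\phi,\psi).
\end{equation*}
Substituting this into the first inequality and using that $\delta \geq 0$ so the inequality is preserved under multiplication gives
\begin{equation*}
    d_R(\tF,\tG) \leq \delta\bigl(d_I(F,G)+1\bigr) \leq \delta\bigl(n + L_B(\phi,\psi) + 1\bigr),
\end{equation*}
which is the desired chain.

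There is essentially no obstacle here: the corollary is a two-line composition of previously established facts, and the only thing to be careful about is making sure the $+1$ term is carried correctly through the substitution, so the final bound reads $\delta(n + L_B(\phi,\psi) + 1)$ rather than $\delta(n + L_B(\phi,\psi)) + 1$. All the real content has already been absorbed into Prop.~\ref{prop:reebvsmapperbound} (the discretization estimate relating the Reeb and mapper cosheaves) and Thm.~\ref{thm:secondBound} (the loss-function bound), so nothing further needs to be proved.
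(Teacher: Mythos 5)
Your proof is correct and matches the paper's intent exactly: the corollary is obtained by chaining Prop.~\ref{prop:reebvsmapperbound} with Thm.~\ref{thm:secondBound}, and the paper itself introduces the corollary with precisely that sentence of explanation. No further comment is needed.
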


\section{Technical Proofs}
\label{sec:technicalProofs}

In this section, we include the technical proofs from the previous sections. 

\subsection{Proofs from Sec.~\ref{sec:background}}

\begin{lemma}
\label{lem:thickeningIsFunctor}
$(-)^n$ is a functor. 
\end{lemma}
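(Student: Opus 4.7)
The plan is to exploit the fact that $\Open(\cU)$ is a poset category: between any two objects there is at most one morphism, so the identity axiom $F[\1_S] = \1_{F(S)}$ and the composition axiom $F[gf] = F[g]F[f]$ hold automatically once $F$ is well-defined on objects and monotone on morphisms. Thus I would reduce functoriality of $(-)^n$ to verifying two things: (i) if $S \in \Open(\cU)$, then $S^n \in \Open(\cU)$; and (ii) if $S \subseteq T$, then $S^n \subseteq T^n$.

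For (i), I would induct on $n$. The base case $n = 0$ is immediate since $S^0 = S$. For $n \geq 1$, by definition $S^n = (S^{n-1})^{\uparrow\downarrow}$, and any set of the form $R^\downarrow$ is equal to its own downset — this is exactly the characterization of Alexandroff-open sets given in Sec.~\ref{ssec:cosheaf}. So $S^n$ is open regardless of whether $S^{n-1}$ was, though the induction keeps the argument tidy.

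For (ii), I would first record two one-line monotonicity facts about the poset-closure operations: if $A \subseteq B$ then $A^\uparrow \subseteq B^\uparrow$ and $A^\downarrow \subseteq B^\downarrow$. Both are immediate from the definitions, since any element witnessing membership in the left-hand side (via some $V \in A$) also witnesses membership on the right (the same $V$ lies in $B$). Chaining these gives $A \subseteq B \Rightarrow A^{\uparrow\downarrow} \subseteq B^{\uparrow\downarrow}$, and an induction on $n$ using $S^n = (S^{n-1})^{\uparrow\downarrow}$ then yields $S^n \subseteq T^n$ whenever $S \subseteq T$.

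There is no genuine obstacle here; the statement is essentially bookkeeping once the Alexandroff topology setup from Sec.~\ref{ssec:functorGraphs} is in hand. The only point worth attention is to be explicit that identities and composites come for free from the poset-category structure, so that the proof consists only of checking monotonicity and the well-definedness of $S^n$ as an open set.
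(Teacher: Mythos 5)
Your proof is correct and takes essentially the same approach as the paper: both reduce to checking monotonicity (that $S \subseteq T$ implies $S^n \subseteq T^n$) and then note that the identity and composition axioms are automatic because $\Open(\cU)$ is a poset category. The one small difference is that you derive monotonicity cleanly from the facts that $(-)^\uparrow$ and $(-)^\downarrow$ are each order-preserving, whereas the paper runs a direct element-chase through the chain $\gamma \geq \tau \leq \sigma$ in the induction step; you also explicitly flag that $S^n$ remains an open set, which the paper states just before the lemma rather than inside its proof.
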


\begin{proof}
First, we check that the images of morphisms are well defined, which is to say that if $S \subseteq T$, then $S^{n} \subseteq T^{n}$. 
The statement is clear if $n = 0$, so by induction, we assume that $S^{n-1} \subseteq T^{n-1}$. 
Given an arbitrary $U_\sigma \in S^{n}$, the statement is immediate if $U_\sigma \in S^{n-1} \subseteq S^{n}$, so we assume $U_\sigma \in S^n \setminus S^{n-1}$. 
For this to happen, there must be a $U_\gamma \in S^{n-1}$ and $\tau \in K$ with $\gamma \geq \tau \leq \sigma$ and thus $U_\gamma \subseteq U_\tau \supseteq U_\sigma$. 
But as $U_\gamma \in S^{n-1} \subseteq T^{n-1}$, this sequence also implies that $U_\sigma \in T^n$,  finishing the well-defined check. 

To ensure this is a functor, we need to check that the identity morphism is sent to the identity, and that composition holds. 
For the former, we see that $S \subseteq S$ gets sent to $S^n \subseteq S^n$, and each is an identity. 
The latter is immediate from the property that $\Open(\cU)$ is a poset category, meaning that there is at most one morphism between any pair of objects.  
\end{proof}

One property of this construction that will be useful is as follows. 
For any $U_\sigma \in S^n$, there is a $U_\tau \in S$ and a sequence of cells of $K$
\begin{equation}
\label{eq:length_n_path_cells}
    \tau 
    \geq \gamma_1 \leq \tau_1 
    \geq \gamma_2 \leq \tau_2 
    \geq  \cdots 
    \geq \gamma_n \leq \sigma
\end{equation}
and thus also a sequence of sets in $\cU$
\begin{equation}
\label{eq:length_n_path_sets}
    U_\tau 
    \geq U_{\gamma_1} \leq U_{\tau_1} 
    \geq U_{\gamma_2} \leq U_{\tau_2} 
    \geq  \cdots 
    \geq U_{\gamma_n} \leq U_\sigma.
\end{equation}
Further, given such a sequence with $\tau \in U$, we know that $\sigma \in U^n$. 
Two examples of this can be seen in Fig.~\ref{fig:Length_n_path}, where $\sigma$ and $\sigma'$ from $U^3$ are given, along with a path satisfying Eq.~\eqref{eq:length_n_path_cells}.
Of course, the choice of sequence for Eq.~\eqref{eq:length_n_path_cells} is not unique, so other options are possible. 
\begin{figure}
    \centering
    \includegraphics[width = 0.35\textwidth]{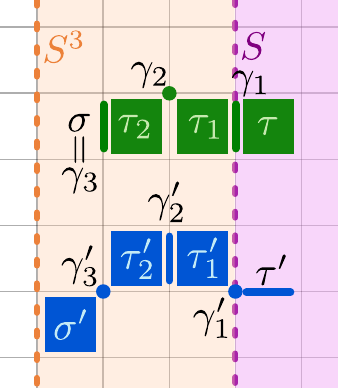}
    \caption{
    Given the purple set $U$, we have an edge $\sigma$ and a square $\sigma'$ which are elements of $U^3$. 
    Then we provide an example sequence for each satisfying Eq.~\eqref{eq:length_n_path_cells} leading to $\tau$ and $\tau'$ in $U$.
    }
    \label{fig:Length_n_path}
\end{figure}

Next we show that the distance of Eq.~\eqref{def:interleavingDistance} is indeed a distance 
using the super-linear family of translations framework of \cite{Bubenik2014a}.
This construction can be generalized to the concept of a category with a flow \cite{deSilva2018}, but the added generality is not needed here. 

\begin{definition}[\cite{Bubenik2014a}]
\label{defn:superlinear}
Let $P = (P,\leq)$ be a preordered set. 
A \emph{translation} on $P$ is a functor $\Gamma: P \to P$ along with a natural transformation $\eta:\1_P \Rightarrow \Gamma$. 
A \emph{super-linear family of translations} is a collection $\{\Gamma_\e \}_{\e\geq 0}$  such that 
$\Gamma_\e \Gamma_{\e'}(p) \leq \Gamma_{\e + \e'}(p)$ for all $p \in P$, and $\e, \e' \geq 0$. 
\end{definition}

\composedThickenings*

\begin{proof}
First, we check that $(-)^n$ is indeed a translation using the above terminology. 
In particular, we define $\gamma^n:\1_{\Open(\cU)} \Rightarrow (-)^n$ to have components $\gamma^n_S: S \to S^n$ as simply the inclusion, and we can easily check that this satisfies the naturality requirements. 

Fix $S \in \Open(\cU)$. 
We need to show that $(S^n)^{n'} = S^{n+n'}$. 
Let $U_\sigma \in (S^n)^{n'}$. 
By previous remarks, this is true if and only if there is a sequence in $K$
\begin{equation*}
    \tau 
    \geq \gamma_1 \leq \tau_1 
    \geq \gamma_2 \leq \tau_2 
    \geq  \cdots 
    \geq \gamma_{n'} \leq \sigma
\end{equation*}
with $U_\tau \in S^n$. 
But this property of $\tau$ happens iff there is also a sequence in $K$
\begin{equation*}
    \tau' 
    \geq \gamma_1' \leq \tau_1 '
    \geq \gamma_2' \leq \tau_2' 
    \geq  \cdots 
    \geq \gamma_{n}' \leq \tau
\end{equation*}
with $U_{\tau'} \in S$. 
Concatenating the two sequences gives a sequence of length $(n+n')$ from $U_\tau \in S$ to $U_\sigma$. 
Thus $U_\sigma \in S^{n+n'}$ iff $U_\sigma \in (S^n)^{n'}$, and hence $(S^n)^{n'} = S^{n+n'}$. 
\end{proof}

\begin{theorem}
    The interleaving distance of Defn.~\ref{def:interleavingDistance} is an extended pseudometric. 
\end{theorem}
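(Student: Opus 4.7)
The plan is to verify that our setup is an instance of the super-linear family of translations framework of Definition~\ref{defn:superlinear}, at which point the extended pseudometric property follows from the general theorem of Bubenik et al. It therefore suffices to check three things on the poset category $\Open(\cU)$: that each $(-)^n$ is a functor, that there is a natural transformation $\eta^n \colon \1_{\Open(\cU)} \Rightarrow (-)^n$, and that the family satisfies the super-linearity condition $(-)^n \circ (-)^{n'} \leq (-)^{n+n'}$.

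The first point is Lemma~\ref{lem:thickeningIsFunctor}. The second is immediate: the inclusion $\eta^n_S \colon S \hookrightarrow S^n$ (which exists because $S = S^0 \subseteq S^n$ by induction on $n$) is natural in $S$ because morphisms in $\Open(\cU)$ are inclusions and $S \subseteq T$ implies $S^n \subseteq T^n$, so the required square commutes trivially. The third point is exactly Lemma~\ref{lem:composedthickenings}, which in fact gives equality $(S^n)^{n'} = S^{n+n'}$, hence super-linearity $\leq$. Applying the main construction of Bubenik et al.~\cite{Bubenik2014a} in the functor category $\Set^{\Open(\cU)}$ then yields the extended pseudometric structure, and one readily checks that the distance produced coincides with $d_I$ of Definition~\ref{def:interleavingDistance}.

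If one prefers to make this self-contained rather than quoting~\cite{Bubenik2014a}, the three axioms can be verified directly. Reflexivity comes from taking $\phi = \psi$ to be the identity natural transformation $F \Rightarrow F = F^0$, giving a $0$-interleaving. Symmetry is immediate from the definition since the roles of $F, G$ and of $\phi, \psi$ are interchangeable. For the triangle inequality, suppose $(\phi, \psi)$ is an $n$-interleaving of $F$ with $G$ and $(\phi', \psi')$ is an $m$-interleaving of $G$ with $H$. Define components
\begin{equation*}
\Phi_S = \phi'^n_S \circ \phi_S \colon F(S) \to H(S^{n+m}), \qquad \Psi_S = \psi^m_S \circ \psi'_S \colon H(S) \to F(S^{n+m}),
\end{equation*}
where the outer maps are shifted by applying Lemma~\ref{lem:composedthickenings} to identify $(S^n)^m = S^{n+m}$. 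Naturality of $\Phi$ and $\Psi$ follows from pasting the naturality squares of $\phi, \phi', \psi, \psi'$, and the two triangle conditions for an $(n+m)$-interleaving follow by pasting the two triangles of each input interleaving (again using $(S^{n+m})^{2} = S^{2(n+m)}$ style identifications). Thus $d_I(F,H) \leq n+m$, and taking infima gives the triangle inequality.

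The main subtlety to be careful about is the bookkeeping in the triangle inequality argument: all four thickening identifications must line up so that the composed maps land in the correct target and so that the pasted triangles genuinely form the diagrams $\triangled$ and $\triangleu$ for the composite assignment. This is precisely what Lemma~\ref{lem:composedthickenings} enables, and it is also the content of super-linearity in the Bubenik et al.~framework, so either route uses the same essential input.
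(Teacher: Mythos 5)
Your first paragraph is essentially the paper's own proof: $\{(-)^n\}_{n\geq 0}$ is verified to be a super-linear family of translations on $\Open(\cU)$ (functoriality is Lemma~\ref{lem:thickeningIsFunctor}, the inclusion $S \subseteq S^n$ supplies the natural transformation $\1 \Rightarrow (-)^n$, and Lemma~\ref{lem:composedthickenings} gives super-linearity with equality), and then the pseudometric axioms follow from \cite{Bubenik2014a}. Your alternative direct verification is also correct and essentially just unfolds that citation; the only small imprecision is that producing the composite triangle diagram uses not just one triangle from each input interleaving but also one naturality square (to commute, e.g., $\psi_{S^{n+2m}}$ past $G[S^n\subseteq S^{n+2m}]$), which is precisely the bookkeeping subtlety you flag at the end.
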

\begin{proof}
Because Lem.~\eqref{lem:composedthickenings} is a stronger requirement than needed for Defn.~\ref{defn:superlinear}, the collection $\{ ( - )^n\}_{n \geq 0}$ forms a super-linear family of translations. 
Then the result is immediate from \cite[Theorem 3.21]{Bubenik2014a}.
\end{proof}

\subsection{Proofs from Sec.~\ref{sec:loss-function}}

To simplify notation, throughout the proofs of this section we often use a $\bullet$ symbol to represent the set indexing a particular map when the subscript would be obvious from the given map. 
For example, we write $\phi_\bullet: F(S^n) \to G(S^{2n})$ rather than writing $\phi_{S^n}$.

\lossimpliescommutes*

\begin{proof}[Proof of Lem.~\ref{lem:lossimpliescommutes}]
We prove the lemma for the first and third entries only as the other arguments are symmetric. 
Assume $\Lpl^{S,T}(\phi) \leq k$ and 
consider the diagram
\begin{equation}
\label{eq:dgm_cheese_wedge}
\begin{tikzcd}[column sep = 4em]
F(S) 
\ar[r, "{F[\subseteq]}"] 
\ar[dr, "\phi_S",] \ar[ddr, "\Phi_S"', very near start]
& F(T) \ar[dr, "\phi_T", very near start] \ar[ddr, "\Phi_T"',very near start]\\
&    G(S^n) \ar[r, crossing over,  very near start, "{G[\subseteq]}"'] \ar[d, "{G[\subseteq]}"] & G(T^n) \ar[d, "{G[\subseteq]}"] \\
&    G(S^{n+k}) \ar[r,  "{G[\subseteq]}"] & G(T^{n+k}).
\end{tikzcd}
\end{equation}
Note that the top of the diagram (Eq.~\ref{eq:dgm_cheese_wedge}) given by
\begin{equation}
\label{eq:dgm_cheese_top}
\begin{tikzcd}
    F(S) \ar[r, "{F[\subseteq]}"] \ar[d,"\phi_S"]
        & F(T) \ar[d, "\phi_T"]\\
    G(S^n) \ar[r, "{G[\subseteq]}"]  
        & G(T^n)
\end{tikzcd}
\end{equation}
does not necessarily commute in the case that $k\geq 1$, and the bottom of the diagram (Eq.~\ref{eq:dgm_cheese_wedge}) given by 
\begin{equation}
\label{eq:dgm_cheese_bottom}
\begin{tikzcd}
    F(S) \ar[r, "{F[\subseteq]}"] \ar[d,"\Phi_S"]
        & F(T) \ar[d, "\Phi_T"]\\
    G(S^{n+k}) \ar[r, "{G[\subseteq]}"]  
        & G(T^{n+k})
\end{tikzcd}
\end{equation}
is $\Parallelograml_{\Phi}(S,T)$, for which we wish to check for commutativity. 
For any $x \in F(S)$, following around the top square, Eq.~\ref{eq:dgm_cheese_top}, gives 
\begin{equation*}
\begin{tikzcd}
x \ar[r,mapsto] \ar[d,mapsto]
& x' \ar[d, mapsto] \\
\substack{ \\a} 
    \ar[r, mapsto, shift right,end anchor = {[yshift = -0.4ex]}] 
& \substack{ b'\\a'}
\end{tikzcd}
\end{equation*}
with $d_{T^n}^G(a',b') \leq k$.
By definition, the image of $a'$ and $b'$ is the same under the map \linebreak ${G(T^n) \to G(T^{n+k})}$.
Then since the front square of Eq.~\ref{eq:dgm_cheese_wedge} given by 
\begin{equation*}
\begin{tikzcd}
    G(S^n) \ar[r, "{G[\subseteq]}"] \ar[d,"{G[\subseteq]}"]
        & G(T) \ar[d, "{G[\subseteq]}"]\\
    G(S^n) \ar[r, "{G[\subseteq]}"]  
        & G(T^n)
\end{tikzcd}
\end{equation*}
commutes by functoriality of $G$, and the side triangles of Eq.~\ref{eq:dgm_cheese_wedge} given by 
\begin{equation*}
\begin{tikzcd}
    F(S) \ar[r, "\phi_S"] \ar[dr, "\Phi_S"']  & G(S^n) \ar[d, "{G[\subseteq]}"] 
    & F(T) \ar[r, "\phi_T"] \ar[dr, "\Phi_T"'] & G(T^n) \ar[d, "{G[\subseteq]}"] \\
    & G(S^{n+k}) && G(T^{n+k})
\end{tikzcd}
\end{equation*}
commute by definition of $\Phi$, we have that the image of $x$ under either direction of the back square, Eq.~\ref{eq:dgm_cheese_bottom},
commutes, proving claim (1). 

Turning to claim (3), consider the noncommutative diagram 
\begin{equation}
\label{eq:dgm_colors}
\begin{tikzcd}[execute at end picture={
\foreach \Nombre in  {A,B,...,F}
  {\coordinate (\Nombre) at (\Nombre.center);}
\fill[yellow,opacity=0.3] 
  (A) -- (E) -- (F) -- cycle;
\fill[yellow,opacity=0.3] 
  (F) -- (C) -- (D) -- cycle;
\fill[blue,opacity=0.1] 
  (B) -- (C) -- (F) -- (E) -- cycle;
}]
|[alias=A]| F(S) 
    \ar[rrr, "{F[\subseteq]}"] 
    \ar[dr, "\phi_\bullet"']
    \ar[drr, "\Phi_\bullet"]
&&& |[alias=B]| F(S^{2n}) 
    \ar[r, "{F[\subseteq]}"] 
& |[alias=C]| F(S^{2n+k}) 
    \ar[r, "{F[\subseteq]}"] 
& |[alias=D]| F(S^{2(n+k)}).
\\
& |[alias=E]| G(S^n) 
    \ar[r, "{G[\subseteq]}"'] 
    \ar[urr, "\psi_\bullet"]
& |[alias=F]| G(S^{n+k}) 
    \ar[urr, "\psi_\bullet"]
    \ar[urrr, "\Psi_\bullet"']
\end{tikzcd}
\end{equation}
The  two yellow triangles 
\begin{equation*}
\begin{tikzcd}
    F(S) \ar[dr, "\phi_{\bullet}"'] \ar[drr, "\Phi_{\bullet}"]
        &&& F(S^{2n+k}) \ar[r, "{F[\subseteq]}"]  & F(S^2(n+k)\\
    & G(S^n)\ar[r, "{G[\subseteq]}"']  & G(S^{n+k}) \ar[ur, "\psi_{\bullet}"] \ar[urr, "\Psi_{\bullet}"] 
\end{tikzcd}
\end{equation*}
commute by definition of $\Phi$ and $\Psi$. 
The blue parallelogram 
\begin{equation*}
\begin{tikzcd}
    G(S^{n}) \ar[r, "{G[\subseteq]}"] \ar[d, "\psi_\bullet"]  & G(S^{n+k}) \ar[d, "\psi_\bullet"]\\
    F(S^{2n}) \ar[r, "{G[\subseteq]}"] & F(S^{2n+k})
\end{tikzcd}
\end{equation*}
is the diagram $\Parallelogramr_\psi(S^n,S^{n+k})$ which also has loss function bounded by $k$, thus elements of $G(S^n)$ are not necessarily the same in the image of $F(S^{2n+k})$ following the parallelogram, but are the same in $F(S^{2(n+k)})$. 

Checking that $\triangled_{\Phi,\Psi}(S)$ commutes amounts to a diagram chase.
For an arbitrary $\alpha \in F(S)$, consider the following elements
\begin{equation*}
\begin{tikzcd}
\alpha
    \ar[rrr, mapsto, end anchor = {[yshift = 1ex]}] 
    \ar[dr, mapsto]
    \ar[drr, mapsto]
&&& \substack{a\\a'\\}
    \ar[r, mapsto, 
        end anchor = {[yshift = 1.5ex]},
        start anchor = {[yshift = 1ex]}] 
    \ar[r, mapsto, 
        end anchor = {[yshift = -0.2ex]},
        start anchor = {[yshift = -1ex]}] 
& \substack{b\\b'\\b''} \Big\}
    \ar[r, mapsto] 
& c
\\
& x
    \ar[r, mapsto] 
    \ar[urr, mapsto,end anchor = {[yshift = -.5ex]}]
& x'
    \ar[urr,mapsto, 
        end anchor = {[yshift = -1ex]}]
    \ar[urrr, mapsto, bend right = 10, start anchor = {[yshift = -.5ex]}]
\end{tikzcd}
\end{equation*}
 aligning with the diagram of Eq.~\ref{eq:dgm_colors}. 
Both $\alpha$ and $x$ map to $x'$ because of the yellow triangle commuting, and both $b''$ and $x'$ map to the same $c$ for the same reason.
Even if $\alpha$ and $x$ map to different elements in $F(S^{2n})$, they must map to the same element in $F(S^{2(n+k)})$, and this element must be $c$, since both $b'$ and $b''$ map to the same element by the bound on the blue parallelogram. 
As this was done for an arbitrary $\alpha$, we have that $\triangled_{\Phi, \Psi}(S)$ commutes. 

Claims (2) and (4) are similar with appropriate choices of diagrams. 
The final statement is immediate since $L(\phi,\psi) = 0$ implies all diagrams needed for an interleaving commute.
\end{proof}

\extendToNatTrans*
\begin{proof}[Proof of Lem.~\ref{lem:extendToNatTrans}] 
We start by defining $\Phi_S$ for arbitrary open sets. 
Note that since ${\Lpl^{S_\tau, S_\sigma} = 0}$, for any $\sigma \leq \tau$, the diagram of the form 
\begin{equation*}
\begin{tikzcd}
F(S_\tau) \ar[r,"{F[\subseteq]}"] \ar[d] \ar[d, "{\Phi_{S_\tau}}"']
    & F(S_\sigma) \ar[d, "{\Phi_{S_\sigma}}"] \\
G(S_\tau^n) \ar[r,"{G[\subseteq]}"] 
    & G(S_\sigma^n)
\end{tikzcd}
\end{equation*}
commutes.

For an arbitrary open $S$, define 
$\cU_S = \{ S_\sigma \mid U_\sigma \in S\}$. 
Any nonempty intersection $U_\sigma \cap U_\tau$ is also an element of $\cU$, and so any nonempty intersection of $S_\sigma \cap S_\tau$ is an element of $\cU_S$, so it is a cover of $S$. 
Then we use the fact that $F$ is a cosheaf, and in particular this means that $F(S)$ is the coequalizer of the diagram 
\begin{equation*}
\begin{tikzcd}[column sep = 1in]
    \displaystyle
    \coprod_{\sigma, \sigma'} F(S_\sigma \cap S_{\sigma'}) 
        \ar[r, shift left, "{F[S_\sigma \cap S_{\sigma'}\subseteq S_\sigma]}"] 
        \ar[r, shift right, "{F[S_\sigma \cap S_{\sigma'}\subseteq S_{\sigma'}]}"'] 
    &
    \displaystyle\coprod_{\tau} F(S_\tau).
\end{tikzcd}
\end{equation*} 
Rephrased, this means that for any set $Q$ with maps $F(S_\sigma) \to Q$ such that the  solid arrow diagrams of the form 
\begin{equation*}
\begin{tikzcd}
F(S_\sigma \cap S_{\sigma'}) \ar[d, "{F[\subseteq]}"'] \ar[r,"{F[\subseteq]}"] 
& F(S_\sigma) \ar[d, "{F[\subseteq]}"]  \ar[ddr, bend left] \\
F(S_{\sigma'}) \ar[r, "{F[\subseteq]}"] \ar[drr, bend right] 
& F(S)  \ar[dr, dashed, "\exists!"]\\
 & & Q
\end{tikzcd}
\end{equation*}
commute for any $\sigma, \sigma'$, then there is a unique map $F(S) \to Q$ whose addition still has all diagrams commute. 
In our case, set $Q = G(S^n)$, and define the legs of the cocone to be $G[\subseteq] \circ \Phi_{S_\sigma}$ as seen in the bold purple arrows of the diagram 
\begin{equation}
\label{eqn:cubedgm}
\begin{tikzcd}[row sep=1.5em, column sep = 1.5em]
F(S_{\sigma} \cap S_{\sigma'})
    \arrow[rr, "\Phi_{\bullet}"] \arrow[dr, "{F[\subseteq]}"] 
    \arrow[dd,swap, "{F[\subseteq]}"] 
    &&
G( (S_{\sigma} \cap S_{\sigma'})^n)\arrow[dd, "{G[\subseteq]}"', very near start] \arrow[dr, "{G[\subseteq]}"] \\
& 
F(S_{\sigma'})
    \arrow[rr, crossing over, thick, violet,"\Phi_{\bullet}", near start ] 
&&
G(S_{\sigma'}^n)
    \arrow[dd, thick, violet, "{G[\subseteq]}"] \\
F(S_\sigma) 
    \arrow[rr, thick, violet,"\Phi_{\bullet}", near start] 
    \arrow[dr, "{F[\subseteq]}"] 
&& 
G(S_{\sigma'}^n)\arrow[dr, thick,  violet, "{G[\subseteq]}"] \\
& 
F(S)
    \arrow[rr, dashed, "{\exists! \, \, \Phi_S}"] 
    \arrow[uu, leftarrow, crossing over, "{F[\subseteq]}", very near end]
&& G(S^n).
\end{tikzcd}
\end{equation}
Note that the diagram prior to the inclusion of the dotted line commutes, since we can check the relevant faces as follows. 
The left and right squares commute because $F$ and $G$ are functors. 
The back and top panels commute because they involve only basis opens; equivalently, because we assumed $\Lpl^{S_\sigma \cap S_{\sigma'}, S_\sigma}= \Lpl^{S_\sigma \cap S_{\sigma'}, S_{\sigma'}}  = 0$.
Then, because $F(S)$ is a colimit of the diagram, there exists a unique map $\Phi_S:F(S) \to G(S^n)$ as noted, making any diagram of this form commute. 

To ensure that the resulting $\Phi_S$ maps  make diagrams of the form 
\begin{equation*}
\begin{tikzcd}
    F(S) \ar[r] \ar[d] & G(S^n) \ar[d] \\
    F(T) \ar[r] & G(T^n)
\end{tikzcd}
\end{equation*}
commute for arbitrary $S \subseteq T$, fix such a pair and an $x \in F(S)$. 
Because $F(S)$ is the colimit, there is a $\sigma$ and an $x_\sigma \in F(S_\sigma)$ such that $x_\sigma \mapsto x$. 
In this case we have the diagram 
\begin{equation*}
\begin{tikzcd}
F(S_\sigma) 
\ar[r, "\Phi_\bullet"] 
\ar[dr] \ar[ddr]
& G(S_\sigma)^n \ar[dr] \ar[ddr]\\
&    F(S) \ar[r, crossing over, "\Phi_\bullet", very near start] \ar[d] & G(S^n) \ar[d] \\
&    F(T) \ar[r, "\Phi_\bullet"] & G(T^n)
\end{tikzcd}
\end{equation*}
The top and bottom squares of the wedge commute because they are the front of the cube of the diagram in Eq.~\eqref{eqn:cubedgm}. 
The left and right triangles commute since $F$ and $G$ are functors. 
Thus the front square commutes. 
This means the resulting $\Phi$ is a natural transformation, and thus $\Lpl^{S,T} = 0$. 
\end{proof}

\extendTriangles*
\begin{proof}[Proof of Lem.~\ref{lem:extendTriangles}] 
Because $\Ltd^{S_\sigma} = 0$ for all basis elements, diagrams of the form 
\begin{equation*}
\begin{tikzcd}
    F(S_\sigma) \ar[rr, "{F[\subseteq]}"] \ar[dr, "\Phi_{S_\sigma}"']&& F(S_\sigma^{2n})\\
    & G(S_\sigma^n) \ar[ur, "\Psi_{S_{\sigma}^n}"']
\end{tikzcd}
\end{equation*}
commute for any $\sigma \in K$.
Given an arbitrary open set $S$, let $x \in F(S)$ be given. 
As in the proof of Lem.~\ref{lem:extendToNatTrans}, there is a $\sigma$ and an $x_\sigma \in F(S_\sigma)$ with $x_\sigma \mapsto x$. 
Then consider the diagram 
\begin{equation*}
\begin{tikzcd}
	F(S_\sigma) &&&& F(S_\sigma^{2n}) \\
	&& F(S) &&& {} & F(S^{2n}) \\
	&& G(S_\sigma^n) \\
	&&&& G(S^n).
	\arrow[from=4-5, to=2-7, "\Psi_\bullet"']
	\arrow[from=3-3, to=1-5, "\Psi_\bullet", near end]
	\arrow[from=1-1, to=1-5, "{F[\subseteq]}"]
	\arrow[from=1-1, to=3-3, "\Phi_\bullet"']
	\arrow[from=1-1, to=2-3, "{F[\subseteq]}"]
	\arrow[from=1-5, to=2-7, "{F[\subseteq]}"]
	\arrow[from=3-3, to=4-5, "{G[\subseteq]}"']
	\arrow[from=2-3, to=2-7, crossing over, "{F[\subseteq]}"]
	\arrow[from=2-3, to=4-5, crossing over, "\Phi_\bullet", near end]
\end{tikzcd}
\end{equation*}
The top square commutes because $F$ is a functor. 
The back triangle commutes by this lemma's assumption.
The left and right squares commute because $\Phi$ and $\Psi$ are natural transformations.
Taken together, this means that the front triangle commutes as required.
\end{proof}

\subsection{Proof from Sec.~\ref{sec:ReebLoss}}
\reebvsmapperbound*
\begin{proof}
    Let $\phi,\psi$ be an $n$-interleaving for $F,G:\Open(\cU) \to \Set$. 
    We will construct an $\e=\delta(n+1)$-interleaving $\tphi$, $\tpsi$ for $\tF,\tG:\Int \to \Set$. 

    We start by defining $\tphi : \tF \Rightarrow \tG^\e$ as $\tpsi$ is analogous. 
    Given an arbitrary interval $I= (a,b)$, let $J = (j\delta, k\delta)$ be the smallest grid-aligned interval containing $I$; i.e. ${j\delta \leq a < (j+1)\delta}$ and $(k-1)\delta<b \leq k\delta$.
 
    Note that 
    $I \subseteq J \subseteq J^{\delta n} \subseteq I^{(n+1)\delta} = I^\e$.
    Let $S = \{S_{\tau_i} \mid j \leq i \leq k-1 \} \cup \{ S_{\sigma_i} \mid j < i < k \}$. 
    A quick check shows that ${S \in \Open(\cU)}$, that $J = |S|$, and that $J^{\delta n} = |S^n|$.
    Chasing definitions, this means that ${\tF(J) = \pi_0(f\inv(J))}$ and $F(S) = \pi_0(f\inv(|S|))$ are equal; similarly $\tF(J^{\delta n}) = F(S^n)$. 
    Then define $\tphi_I$ to be the map defined by the composition 
    \begin{equation*}
    \begin{tikzcd}
    \tF(I) \ar[rr,dashed, "\tphi_I"] \ar[d, "{\tF[\subseteq]}"'] 
            && \tG(I^{(n+1)\delta}) \\
        \tF(J)  \ar[d, "="'] \ar[r,dashed, "\tphi_J"]
        & \tG(J^{\delta n}) \ar[ur, "{\tG[\subseteq]}"'] \\ 
        F(S) \ar[r, "\phi_S"'] 
        & G(S^n). \ar[u, "="']
    \end{tikzcd}
    \end{equation*}
    Notice that setting $I$ to be an axis aligned interval $J$ gives the map $\tphi_J$ marked. 

    Now that we have built $\tphi$ and $\tpsi$, we need to check (i) that each is a natural transformation and (ii) that they satisfy the triangle diagrams of Defn.~\ref{def:ReebInterleavingDistance}. 
    For (i) we check only $\tphi$ as, again, $\tpsi$ is symmetric. 
    To this end, assume we have $I \subseteq I'$ with minimal grid-aligned intervals $J$ and $J'$, and let $S,S' \in \Open(\cU)$ be such that $|S|=J$ and $|S'|=J'$.
    Then consider the diagram 
    \begin{equation*}
    \begin{tikzcd}
    \tF(I) \ar[rr, "\tphi_I"] \ar[d, "{\tF[\subseteq]}"']
            && \tG(I^{(n+1)\delta}) 
            \ar[dddr, "{\tG[\subseteq]}"] \\
        \tF(J)  \ar[d, "="'] 
        && \tG(J^{\delta n}) \ar[u, "{\tG[\subseteq]}"] \\ 
        F(S) \ar[rr, "\phi_S"'] 
            \ar[dddr, "{F[\subseteq]}"']
        && G(S^n) \ar[u, "="']
            \ar[dddr, "{G[\subseteq]}"']\\
    &\tF(I') \ar[rr,"\tphi_{I'}", crossing over , near start] \ar[d, "{\tF[\subseteq]}"'] 
        \ar[uuul, leftarrow, crossing over, "{\tF[\subseteq]}"']
            && \tG((I')^{(n+1)\delta}) \\
    &    \tF(J')  \ar[d, "="] 
        && \tG((J')^{\delta n}) \ar[u, "{\tG[\subseteq]}"'] \\ 
    &    F(S') \ar[rr, "\phi_{S'}"'] 
        && G((S')^n). \ar[u, "="']    
    \end{tikzcd}
    \end{equation*}
    Note that the front and back panels of the cube are the diagrams that were used to define $\tphi_I$ and $\tphi_{I'}$, so they commute. 
    The bottom panel commutes because $\phi$ is a natural transformation. 
    The left and right panels commute because $F$ and $\tF$ arise from computing connected components on the same underling input data. 
    Thus, the top square commutes, and this is exactly what is needed to say that $\tphi$ is a natural transformation. 

    To check (ii), fix an interval $I$ with grid aligned $J \subseteq I$ and $S \in \Open(\cU)$ with $|S|=J$. 
    Then consider the diagram 
    \begin{equation*}
    \begin{tikzcd}
    \tF(I) 
        \ar[rr, "\tphi_I"] \ar[d, "{\tF[\subseteq]}"'] 
    && \tG(I^{(n+1)\delta})
        \ar[rr, "\tpsi_{I^\e}"]
        \ar[d]
    && \tF(I^{2(n+1)}\delta)
    \\
    \tF(J)  
        \ar[d, "="'] 
    & \tG(J^{\delta n}) 
        \ar[ur, "{\tG[\subseteq]}"] 
        \ar[r, "{\tG[\subseteq]}"']
    & \tG(J^{(n+1)\delta}))  
        \ar[d, "="'] 
    & \tF(J^{\delta (2n+1)}) 
        \ar[ur, "{\tF[\subseteq]}"'] 
        \\ 
    F(S) \ar[r, "\phi_S"] 
         \ar[drr, "{F[\subseteq]}"']
    & G(S^n)
        \ar[u, "="']
        \ar[r, "{G[\subseteq]}"]
        \ar[dr, "\psi_{S^n}"]
    &G(S^{n+1}) \ar[r, "\psi_{S^{n+1}}"] 
    & F(S^{2n+1}) \ar[u, "="']
    \\
    && F(S^{2n}).
    \ar[ur, "{F[\subseteq]}"']
    \end{tikzcd}
    \end{equation*}
    The left and right hexa-laterals commute by definition of $\tphi$ and $\tpsi$ respectively. 
    The middle top triangle commutes because $\tG$ is a functor, and the middle square commutes because $\tG$ and $G$ are defined as connected components of the same input data. 
    The bottom left triangle commutes because $\phi$ and $\psi$ are an $n$-interleaving. 
    The right quadrilateral commutes because $\psi$ is a natural transformation.
    All this shows that the outside boundary of the diagram commutes. 
    Swapping out the interior, we have 
    \begin{equation*}
    \begin{tikzcd}
    & \tG(I^{(n+1)\delta})
        \ar[dr, "\tpsi_{I^\e}"]\\
    \tF(I) 
        \ar[ur, "\tphi_I"] \ar[d, "{\tF[\subseteq]}"'] 
        \ar[rr, "{\tF[\subseteq]}"]
    & 
    & \tF(I^{2(n+1)}\delta)
    \\
    \tF(J)  
        \ar[d, "="'] 
        \ar[rr, "{\tF[\subseteq]}"]
    & 
    & \tF(J^{\delta (2n+1)}) 
        \ar[u, "{\tF[\subseteq]}"'] 
        \\ 
    F(S) 
         \ar[dr, "{F[\subseteq]}"']
         \ar[rr, "{F[\subseteq]}"]
    &
    & F(S^{2n+1}) \ar[u, "="']
    \\
    & F(S^{2n}).
    \ar[ur, "{F[\subseteq]}"']
    \end{tikzcd}
    \end{equation*}
    The bottom triangle commutes because $F$ is a functor, the next square up commutes by definition of $F$ and $\tF$, and the top square commutes because $\tF$ is a functor. 
    Combining this with the outside ring commuting means that the top triangle commutes, which is the final ingredient needed for the definition of an interleaving. 
\end{proof}
\section{Discussion}
\label{sec:discussion}

In this paper, we defined a loss function that quantifies how far a diagram is from being commutative, and used such a loss function to bound the interleaving distance, both for mapper and Reeb graph settings.
This work provides a way to evaluate a particular set of maps, which immediately suggests the question of utilizing this quantification to iteratively improve our comparison. 
Here, the quality of the bound is dependent on the quality of the input $n$-assignment, but we assume no control over that input in this paper and so we cannot evaluate the tightness of the bound. 
In the followup work \cite{Chambers2025}, we will use this bound in the context of an ILP framework, where an input $n$-assignment can be improved incrementally thus finding a better bound on the distance. 
The potential for not only getting better approximations but also returning the actual interleaving maps used in the bound is an exciting step toward computing interleaving distances for graph-based signatures available in practice. 
Of course, we know that deciding if two Reeb graphs are $\epsilon$-interleaved (for $\epsilon \ge 1$) is NP-hard
\cite{Bjerkevik2018}, so our ILP has no guarantee of reaching the global optimal solution. 
This is related to other available work providing bounds similarly lacking tightness guarantees for the interleaving distance restricted to merge trees \cite{Curry2022, Pegoraro2021}. 
The first of these (\cite{Curry2022}) uses the labeled merge tree distance \cite{Munch2018,Gasparovic2019} which has more limited properties when generalizing to Reeb graphs \cite{Lan2024} so it is unclear if this would generalize.
The latter uses the map formulations on the topological spaces that arise in the special case of merge trees, so major modifications would need to be made for this to apply to the Reeb graph setting as seen here. 

We believe that our loss function based framework is applicable in a broader context where data are modeled as sheaves or cosheaves in the category of sets, as sheaf theory is emerging as a tool in data science to study, e.g.,   distributed systems~\cite{Malcolm2009,Mansourbeigi2017}, sensor networks~\cite{Robinson2017}, model fit~\cite{KvingeJeffersonJoslyn2021}, and uncertainty quantification~\cite{JoslynCharlesDePerno2020}. 
In particular, one interesting next step is to study how to extend our framework to work with persistence modules as cosheaves in the category of vector spaces (e.g.,~\cite{BubenikMilicevic2021}). 
As the interleaving distance for multiparameter persistence modules is similarly NP-hard \cite{Bjerkevik2019}, this would be an exciting step toward computational efforts in this broad class of topological signatures.

\printbibliography

\end{document}